\newcommand{\CIF}{{CIF}}
\newcommand{\syncfs}{\ensuremath{\gamma}}
\tikzset{
  state/.style={
    circle,
    draw=black,
    minimum height=5ex
  }
}
\tikzset{
  superstate/.style={
    rectangle,
    draw=black,
    rounded corners=1ex,
    minimum height=10ex,
    minimum width=10ex
  }
}
\tikzset{
  namebox/.style={
    rectangle,
    draw=black,
    rounded corners=0ex,
    minimum height=2ex,
    minimum width=4ex
  }
}
\tikzset{
  autbox/.style={
    rectangle,
    draw=black,
    rounded corners=0pt,
    minimum height=2ex,
    minimum width=4ex
  }
}
\tikzset{
  divline/.style={
    -,
    draw=black,
    loosely dashed
  }
}
\tikzset{
  andbox/.style={
    loosely dashed,
    draw=black,
    rounded corners=0pt,
    minimum height=2ex,
    minimum width=4ex
  }
}
\renewcommand{\concat}{\ensuremath{+\hspace{-1.5ex}+\hspace{0.5ex}}}
\newcommand{\sstate}[1]{\ensuremath{\langle #1 \rangle}}
\newcommand{\satrans}[1]{\ensuremath{\trans{#1}}}
\newcommand{\setrans}[1]{\ensuremath{\etrans{#1}}}
\newcommand{\saof}[0]{
\ensuremath{\stackrel {\mathrm{sync}} {\leadsto}}
}
\newcommand{\compsof}[0]{
\ensuremath{\stackrel {\mathrm{\#}} {\leadsto}}
}
\newcommand{\ipredof}[0]{
\ensuremath{\stackrel {\mathrm{ipred}} {\leadsto}}
}
\newcommand{\invof}[0]{
\ensuremath{\stackrel {\mathrm{inv}} {\leadsto}}
}
\newcommand{\lfunfs}{\ensuremath{\mathrm{L}}}
\newcommand{\lfun}[1]{\ensuremath{\lfunfs(#1)}}
\newtheorem{definition}{Definition}
\newtheorem{theorem}{Theorem}
\newtheorem{prop}{Property}
\newtheorem{proof}{Proof}
\title{Linearization of CIF Through SOS}
\author{
  D.E. Nadales Agut \qquad\qquad M.A. Reniers
  \institute{
    Systems Engineering \\
    Department of Mechanical Engineering \\
    Eindhoven University of Technology (TU/e)
  }
  \email{\{d.e.nadales.agut, m.a.reniers\}@tue.nl}
}
\date{\today}
\begin{document}

\maketitle

\begin{abstract}
  Linearization is the procedure of rewriting a process term into a
  linear form, which consist only of basic operators of the process
  language.
  This procedure is interesting both from a theoretical and a
  practical point of view.
  In particular, a linearization algorithm is needed for the
  Compositional Interchange Format (CIF), an automaton based
  modeling language.

  The problem of devising efficient linearization algorithms is not
  trivial, and has been already addressed in literature.
  However, the linearization algorithms obtained are the result of an
  inventive process, and the proof of correctness comes as an
  afterthought. Furthermore, the semantic specification of the
  language does not play an important role on the design of the
  algorithm.

  In this work we present a method for obtaining an efficient
  linearization algorithm, through a step-wise refinement of the SOS
  rules of CIF. As a result, we show how the semantic specification of
  the language can guide the implementation of such a procedure,
  yielding a simple proof of correctness.

\end{abstract}


\section{Introduction}
\label{sec:introduction}

Linearization is the procedure of rewriting a process term into a
linear form, which consist only of \emph{basic operators} of a process
language~\cite{Khadim3:LinHChiPCTech07,BrandReniCuij:LinHybProcJLAP06,Usenko:LinearizationMuCRLPhD02}. Linearization is also referred to as \emph{elimination} in
ACP style process algebras~\cite{BaetenBastenReniers:ProcessAlgebra10}.

From a theoretical perspective, linearization of process terms is an
interesting result. It allows to get a better understanding about the
expressiveness of the language constructs, since it shows that all its
terms are reducible to some normal form (which contains only a limited
set of operators of the language). Also, linearization is useful in
proving properties about closed terms, since the number of cases that
needs to be dealt with in a proof by structural
induction becomes smaller.

The Compositional Interchange Format
(CIF) \cite{Baeten7:MultiformCIF2D112Tech10}, is a language for
modeling real-time, hybrid and embedded systems. CIF is developed to establish
inter-operability of a wide range of tools by means of model
transformations to and from the CIF. As such it plays a central role in the
European projects Multiform~\cite{MULTIFORM:MULTIFORMMisc08},
HYCON~\cite{HYCONNoE:HYCONMisc05}, C4C~\cite{C4C:C4CMisc08}, and HYCON
2~\cite{HYCONII11}. \CIF\
has a formal semantics \cite{Baeten7:MultiformCIF2D112Tech10}, which
is defined in terms of Structured Operational Semantics Rules (SOS) in
the style of Plotkin~\cite{PlotkinSOSArtJLAP04}.

Besides its theoretical importance, linearization of \CIF\ models
eliminates operators, such as urgency, that cannot be handled in other
languages.
Since \CIF\ is meant to be used as an interchange format, the
elimination of the operators broadens the set of models that can be
translated to other languages.
For the hierarchical extension of \CIF~\cite{BeoharNadales:hCIF10}, hCIF,
linearization makes the elimination of hierarchy possible, and thus,
all the tools available for CIF become available for use with hCIF models as
well.

It is our goal to build a linearization algorithm for \CIF, which
results in an efficient representation of the original model,
and such that all the operators of the language, such as
  parallel composition or synchronization are eliminated. The
problem of efficient linearization has been already studied in
literature~\cite{Usenko:LinearizationMuCRLPhD02,
  BrandReniCuij:LinHybProcJLAP06, Khadim3:LinHChiPCTech07} for
process-algebraic languages for describing and analyzing
discrete-event systems and hybrid systems.
However, in the previous cases, the linearization algorithm is the
result of an inventive process, and the proof of correctness comes as
an afterthought. The semantic specification of the language does not
play an important
role on the design of the algorithm.

Previously, we studied the problem of implementing a simulator from
the SOS specification of \CIF~\cite{NadalesReniers:DerivingSimulator11}.
The semantics of \CIF\ is defined in terms of SOS rules,
  which induce a hybrid transition system, where each state contains a
  \CIF\ term followed by a valuation (assignment of values to
  variables). This kind of semantics, even though useful for
  specification purposes, was not suitable for the implementation of a
  simulator (interpreter) for the language.
This problem was solved by giving a set of SOS rules,
called \emph{symbolic rules}, which induced transition
systems that do not contain the valuation part.
It was also
noted that the symbolic transition system induced by these rules is
finite, and it resembles a (CIF) automaton. Thus, the symbolic SOS
rules for \CIF\ offer a straightforward algorithm for linearizing
\CIF\ models. However, the resulting automaton has a size that may be
exponential in the size of the input model.

In this work we study the possibility of reusing the existing results on
efficient linearization algorithms for obtaining a linear form of
\CIF\ from
SOS rules.
The idea is to give a more concrete version of the symbolic SOS rules
of \CIF\ (which is in turn a concrete version of the SOS rules with
data), such that the transition system they induce can be translated
to an automaton whose size does not grow exponentially as the result
of interleaving actions (for synchronizing action the growth is still
exponential, but in practice this is not a serious limitation since
synchronization takes place only among a limited number of
components).

As a result, we show a linearization procedure, which is obtained
from the SOS specification of the language. In this way, the design of
the algorithm requires less invention steps, reducing the
opportunities to introduce mistakes, and at the same time it yields a
simple proof of correctness.

\section{Setting the Scene}
\label{sec:setting-scene}

For the discussion presented here, we consider a simplified version of
\CIF, which is untimed and contains only automata, a parallel
composition
operator, and a synchronizing action operator. This helps to keep the focus on the ideas, without
distracting the reader with the complexity of \CIF\footnote{This
  language contains over 30 deduction rules}. The techniques and
results presented here can be easily extended to the setting of timed
and hybrid
systems, since we handle concepts such as invariants and
time-can-progress
 conditions in a symbolic manner.

We begin by defining automata and the terms of our language.
Throughout this work, notation $\predicates$ is used to refer to a set
of predicates, $\variables$ is a set of variables, $\actions$ is a
set of actions, $\tau$ is the silent action ($\tau \notin \actions$),
and $\actionstau \triangleq \actions \cup \{\tau\}$.

\begin{definition}[Automaton]
  An automaton is a tuple $(V, \actv, \inv, E, \actS)$, where $V
  \subseteq \locations$ is a set of locations, $\actv \in V
  \rightarrow \predicates$ is the initial predicate function, $\inv
  \in V \rightarrow \predicates$ is the invariant function, $E
  \subseteq V \times \actionstau \times \predicates \times V$ is the
  set of edges, and  $\actS \subseteq \actions$ is a set of
    synchronizing actions.
\end{definition}

Figure~\ref{fig:gate} presents a model of a railroad gate.
It has two modes of operation (locations), closed and opened, denoted
$C$ and $O$ respectively.
Its initial predicate function associates the condition $\mathit{wq} =
[\ ]$ to location $C$ (represented graphically with an incoming arrow
without source location), and the predicate $\false$ to location $O$
(represented by the absence of such an arrow). Here $\mathit{wq}$ is
the waiting queue that contains the id's of the trains waiting to pass
through the gate, $[\ ]$ is the empty list, and we denote lists
by writing their elements between brackets, and separated by commas.
Location $C$ has $n=0$ as invariant, where $n$ is the numbers of
trains crossing the gate, and location $O$ has invariant $n \leq 1$.
The automaton synchronizes with other components in actions
$\mathit{rq}$, $\mathit{go}$, and $\mathit{out}$.

The automaton has four edges. Two edges $(C, \mathit{rq},
\mathit{wq}^+ = \mathit{wq} \concat [\mathit{id^+}], C)$, and $(O,
\mathit{rq}, \mathit{wq}^+ = \mathit{wq} \concat [\mathit{id^+}], O)$,
which are used to enqueue requests from the trains that want to pass
the gate. Given two sequences $\xs x$ and $\xs y$, $\xs x \concat \xs
y$ denotes their concatenation. The predicate $\mathit{wq}^+ = \mathit{wq}
\concat [\mathit{id^+}]$ expresses that the new value of the waiting queue after
performing action $\mathit{rq}$ will be the old waiting queue
($\mathit{wq}$) extended with the id of the train that request access
(this id is contained in variable $id^+$). Graphically these edges are
represented by two self loops in locations $C$ and $O$, labeled
$\mathit{rq}, \mathit{wq}^+ = \mathit{wq} \concat [\mathit{id^+}]$.
The gate can make a transition from the closed state to the opened
state, by issuing a $\mathit{go}$ action, which sends the id at the
front of the waiting queue using variable $p$.

\begin{figure}[htb]
  \centering
  \scalebox{0.9}{
    \begin{tikzpicture}[->,>=stealth',shorten >=1pt,auto, semithick,
  initial text=, node distance=13em]

  \node[state] (closed) {$
    \begin{array}{c}
      \mathit{C} \\
      \inv: n = 0
    \end{array}
    $};
  \node (i) at ($(closed.west)+(-1.5,0)$) {$\mathit{wq}=[\ ]$};

  \node [state] (opened) [right of=closed] {$
    \begin{array}{c}
      \mathit{O}\\
      \inv: n \leq 1
    \end{array}
    $};

  \path (i) edge (closed);

  \path (closed) edge [bend left] node [above]
  {$\mathit{go}, [p^+] \concat \mathit{wq}^+ = \mathit{wq}$} (opened);

  \path (opened) edge [bend left] node [below]
  {$\mathit{out}$} (closed);

  \path (opened) edge [loop below] node [below] {$\mathit{rq},
    \mathit{wq}^+ = \mathit{wq} \concat [\mathit{id^+}]$} (opened);

  \path (closed) edge [loop below] node [below] {$\mathit{rq},
    \mathit{wq}^+ = \mathit{wq} \concat [\mathit{id^+}]$} (closed);


  \node[namebox, fit=(closed) (opened) (current bounding box.south west)
  (current bounding box.north east)] (bigsquare) {};

  \node[namebox,anchor=south west] at (current bounding box.north
  west) {$\mathit{Gate}$};

  \node[namebox,anchor=south east] at
  (bigsquare.north east)
  {$
    \actS = \{\mathit{rq}, \mathit{go}, \mathit{out} \}
    $
  };
\end{tikzpicture}
  }
  \caption{\CIF\ model of a gate.}
  \label{fig:gate}
\end{figure}
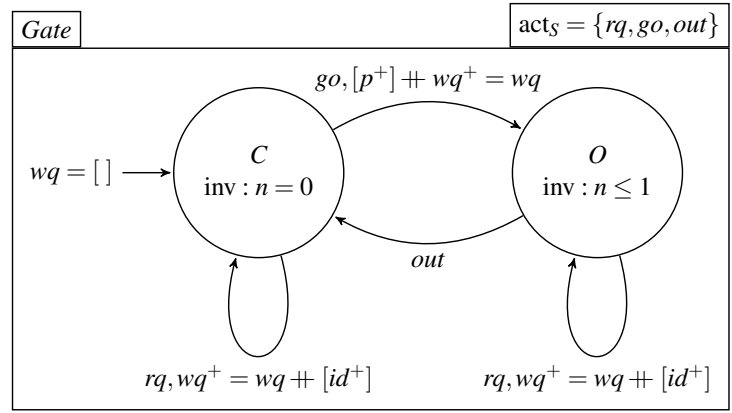

In Figure~\ref{fig:train} we present the model of a train, which will
be run in parallel with the gate model. It has a parameter $i$, which
represents the train's id. It has four locations: far ($F$),
near ($N$), stopped ($S$), and passing ($P$). Location $F$ is the only
initial location. When the train approaches the gate it issues a
request to pass the gate by sending its id though variable $id$. Once
in the near location, it can only go to the passing state if variable
$p$ is updated to its id (this update is carried out by the gate,
as we have seen above). Otherwise it makes a transition to the
stopped state. When the train enters the gate it increments variable
$n$, and it decrements it upon departure.

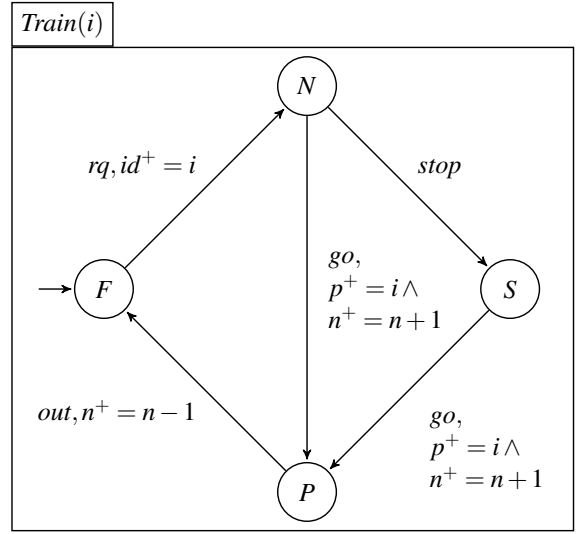
\begin{figure}[htb]
  \centering
  \scalebox{0.9}{
    \begin{tikzpicture}[->,>=stealth',shorten >=1pt,auto, semithick,
  initial text=, node distance=13em]

  \node[initial,state] (safe) {$\mathit{F}$};

  \node [state] (near) at ($(safe.center)+(3,3)$) {$\mathit{N}$};

  \node [state] (stopped) at ($(safe.center)+(6,0)$) {$\mathit{S}$};

  \node [state] (crossing) at ($(safe.center)+(3,-3)$) {$\mathit{P}$};

  \path (safe) edge node
  {$\textit{rq}, \mathit{id}^+ = i$} (near);

  \path (near) edge node
  {$\mathit{stop}$} (stopped);

  \path (near) edge node
  {$
    \begin{array}{l}
      \mathit{go},\\
      p^+=i \wedge{} \\ n^+=n+1
    \end{array}
    $} (crossing);

  \path (stopped) edge node
  {$
    \begin{array}{l}
      \mathit{go},\\
      p^+=i \wedge{} \\ n^+=n+1
    \end{array}
    $} (crossing);

  \path (crossing) edge node {$
    \mathit{out}, n^+ = n - 1
    $} (safe);

  \node[namebox, fit=(safe) (near) (current bounding box.south west)
  (current bounding box.north east)] (bigsquare) {};

  \node[namebox,anchor=south west] at (current bounding box.north
  west) {$\mathit{Train(i)}$};

\end{tikzpicture}

  }
  \caption{\CIF\ model of a train.}
  \label{fig:train}
\end{figure}

These models can be composed in parallel using the parallel
composition operator, denoted as $\parallel$. Actions in \CIF\ are not
synchronizing by default. Thus in the parallel composition
\begin{equation*}
  \mathit{Train(0)} \parallel \mathit{Train(1)}
\end{equation*}
the actions of the two trains will be interleaved.

We want to put the parallel composition of the two trains in parallel
with the gate automaton, in such a way that the trains synchronize
with the actions $\mathit{rq}$, $\mathit{go}$, and $\mathit{out}$ of
the gate. This can be achieved using the \emph{synchronizing action}
operator, denoted as $\syncfs_A$. Informally, composition
$\sync{A}{p}$ behaves as composition $p$, except that all the actions
of the set $A$ are made synchronizing in $p$. Below we explain this.
Using these operators, we can express train gate model in \CIF\ as follows:
\begin{equation}
  \label{eq:2}
  \sync{\{\mathit{rq}, \mathit{go},
    \mathit{out}\}}{\mathit{Train(0)} \parallel
  \mathit{Train(1)}}\parallel \mathit{Gate}
\end{equation}
As a consequence of the use of the synchronizing action operator in
\eqref{eq:2}, action $i \in \{\mathit{rq}, \mathit{go},
\mathit{out}\}$ in $\mathit{Train(j)}$, $j \in \{0, 1\}$, will
synchronize with action $i$ in the gate. Actions in the set
$\{\mathit{rq}, \mathit{go}, \mathit{out}\}$ are interleaved in the
parallel compositions of the trains (they \emph{do not} synchronize)
since the scope operator only make actions synchronizing in the outer
scope. For more details see the rules of and their explanation
Table~\ref{tab:explicit-rules}.


Formally, the set of all \CIF\ compositions is defined as follows:
\begin{definition}[Compositions]
  The set $\compositions$ of all compositions is defined through the
  following abstract grammar: $\compositions ::= \alpha\ |\
  \compositions \parallel \compositions \ |\ \sync{A}{\compositions}$,
  where $\alpha$ is an automaton and $A \subseteq \actions$.
\end{definition}

In the next section we present the formal semantics of \CIF\ compositions,
both its explicit version and  its symbolic counterpart.

\subsection{Explicit and Symbolic Semantics of \CIF}
\label{sec:expl-symb-semant}

The semantics of \CIF\ is defined in terms of hybrid transition
systems~\cite{CuijpersReniers:LostInTranslation08}.
In the context of the present work, we restrict our attention to
ordinary transition systems (thus omitting time
  transitions), extended with \emph{environment transitions} (see
below).

The labeled transition systems we are considering have states of the
form $(p, \sigma)$. Here $p \in \compositions$, and $\sigma \in
\Sigma$ is a valuation, where $\Sigma = \variables \rightarrow
\values$, and $\values$ denotes a set of values. The valuation records
the values of the model variables at a certain moment.
There are two types of transitions in these labeled transition systems.
\emph{Action transitions}, of the form $$(p, \sigma) \trans{a, b} (p',
\sigma')$$ model the execution of an action $a$ by composition $p$ in
an initial valuation $\sigma$, which changes composition $p$ into $p'$
and results in a new valuation $\sigma'$.
Label $b$ is a boolean that indicates whether action $a$ is
synchronizing. \emph{Environment transitions}, of the form $$(p,
\sigma) \etrans{A} (p', \sigma')$$ model the fact that the initial
conditions and invariants of $p$ ($p'$ respectively) are satisfied in
$\sigma$ ($\sigma'$), and $A$ is the set of synchronizing
actions of $p$ and $p'$. Environment transitions are used to obtain
the state changes allowed by a model in a parallel composition
context.

The transition system associated to a composition can be obtained by
means of SOS rules. Below we present the explicit rules, where we have
omitted the symmetric version of the parallel composition rule. Given
a valuation $\sigma$, we define $\sigma'^+ \triangleq \{ (x^+, v) \ |\
(x, v) \in \sigma\}$. We use notation $\alpha$ to refer to the
automaton $ (V, \actv, \inv, E, \actS) $, and $\alpha[x]$ to refer to
$ (V, \ipred{x}, \inv, E, \actS)$, where $\ipred{x}(w) \triangleq w
\equiv x$. Throughout this work, $\mathrm{FV}(p)$ is the set of free
variables of $p$.

\begin{table}[htb]
  \centering
  \begin{tabular}{c c}
    \begin{minipage}{.4\linewidth}
      \Sosrule{
        (v, a, r, v') \in E,
        \sigma \models \actv(v) \wedge \inv(v),\\
        \sigma' \models \inv (v'),
        \sigma'^+ \cup \sigma \models r,\\
        \langle \forall x :: x^+ \notin \textrm{FV}(r) \Rightarrow \sigma
        (x) = \sigma'(x) \rangle
      }{
        (\alpha, \sigma)
        \trans{a, a \in \actS}
        (\alpha[v'], \sigma')
      }
      {rule:action:automata}
    \end{minipage} &
    \begin{minipage}{.4\linewidth}
      \Sosrule{
        v \in V,
        \sigma \models \actv(v) \wedge \inv(v),\\
        \sigma' \models \inv (v)
      }{
        (\alpha, \sigma)
        \etrans{\actS}
        (\alpha[v], \sigma')
      }
      {rule:environment:automata}
    \end{minipage}\\
    & \\
    \begin{minipage}{.4\linewidth}
      \Sosrule{
        (p, \sigma) \trans{a, \true} (p', \sigma'),
        (q, \sigma) \trans{a, \true} (q', \sigma')
      }{
        (p \parallel q, \sigma) \trans{a, \true} (p' \parallel q', \sigma')
      }{rule:action:sync-pc}
    \end{minipage} &
    \begin{minipage}{.45\linewidth}
      \Sosrule{
        (p, \sigma) \trans{a, b} (p', \sigma'),
        (q, \sigma) \etrans{A} (q', \sigma'), a \notin A
      }{
        (p \parallel q, \sigma) \trans{a, b}
        (p' \parallel q', \sigma')
      }{rule:action:interleaving-pc}
    \end{minipage}\\
    & \\
    \begin{minipage}{.45\linewidth}
      \Sosrule{
        (p, \sigma) \etrans{A_p} (p', \sigma'),
        (q, \sigma) \etrans{A_q} (q', \sigma')
      }{
        (p \parallel q, \sigma) \etrans{A_p \cup A_q} (p' \parallel q', \sigma')
      }{rule:environment:pc}
    \end{minipage} &
    \begin{minipage}[htb]{.4\linewidth}
      \Sosrule
      {
        (p , \sigma)
        \trans{a, b, X}
        (p', \sigma')
      }
      {
        (\sync{A}{p} , \sigma)
        \trans{a, b \vee a \in A, X}
        (\sync{A}{p'}, \sigma')
      }
      {rule:action:synchronization0}
    \end{minipage} \\ & \\
    \multicolumn{2}{c}{
      \begin{minipage}[htb]{0.4\linewidth}
        \Sosrule
        {
          (p,\sigma)
          \etrans{A'}
          (p', \sigma')
        }{
          (\sync{A}{p},\sigma)
          \etrans{A \cup A'}
          (\sync{A}{p'}, \sigma')
        }{rule:consistency:synchronization}
      \end{minipage}} \\ \\
  \end{tabular}
  \caption{Explicit rules for \CIF}
  \label{tab:explicit-rules}
\end{table}

Rule~\ref{rule:action:automata} states that an action can be triggered
by an automaton, if there is an edge $(v, a, r, v')$ such that the
initial predicate and the invariant are satisfied in the initial
valuation $\sigma$, and it is possible to find a new valuation
$\sigma'$ in which the invariant and the reset predicate are
satisfied. The only variables that change in $\sigma'$ w.r.t. $\sigma$
are those free variables of $r$ that are of the form $x^+$.
Rule~\ref{rule:environment:automata} states that an automaton is
consistent in initial valuation $\sigma$ if the initial predicate and
invariant are satisfied in $\sigma$, and the valuation can be changed
to $\sigma'$ only if the invariant is preserved.
Rule~\ref{rule:action:sync-pc} expresses that an action $a$ can be
executed synchronously if it is \emph{marked as synchronizing} in both
components. The interleaving behavior is modeled in
Rule~\ref{rule:action:interleaving-pc}, where an action $a$ can be
executed in $p$ if \emph{it is not synchronizing} in $q$.
In Rule~\ref{rule:action:synchronization0} an action $a$ is marked as
synchronizing if $a \in A$, or $a$ is synchronizing in $p$. The
environment rule for the synchronizing action operator
(Rule~\ref{rule:consistency:synchronization}) adds $A$ to the set of
synchronizing actions of $p$.

As noted in~\cite{NadalesReniers:DerivingSimulator11}, the explicit
rules are not suitable for implementation purposes.
These rules often induce infinitely branching transition systems, and
as a consequence it is not possible to obtain the set of possible
successor states. In particular, the labels of the hybrid transition
systems contain \emph{trajectories}, of an dense domain, which are
defined in the rules through computations over these dense sets.
Another problem is that the valuations specify implicit constraints,
such as ``variables owned by a certain automaton cannot be changed in
a parallel composition'', which require to compute operations on
infinite sets of valuations to get the set of possible successor
states.

The solution to the problem explained above was to obtain a set of
\emph{symbolic rules}~\cite{Hennessy:SymbolicBisimulations1995} from
the explicit SOS specification. These
symbolic rules represent the possible state changes by means of
predicates, and thus, the state change caused by an action is visible
on the arrows of the transitions. The symbolic rules for the language
considered in this paper are shown in Table~\ref{tab:symbolic-rules}.

\begin{table}[htb]
  \centering
  \begin{tabular}{c c}
    \begin{minipage}{.40\linewidth}
      \Sosrule{
        (v, a, r, v') \in E
      }{
        \sstate{\alpha}
        \trans{a, a \in \actS, \actv(v), \inv(v), \inv(v'), r}
        \sstate{\alpha[v']}
      }
      {srule:action:automata}
    \end{minipage} &
    \begin{minipage}{.40\linewidth}
      \Sosrule{
        v \in V
      }{
        \sstate{\alpha}
        \etrans{\actv(v), \inv(v),\actS}
        \sstate{\alpha[v]}
      }
      {srule:environment:automata}
    \end{minipage}\\
    & \\
      \begin{minipage}{0.45\linewidth}
        \Sosrule{
          \sstate{p} \trans{a, \true, u_p, n_p, n_p', r_p} \sstate{p'},
          \sstate{q} \trans{a, \true, u_q , n_q, n_q', r_q} \sstate{q'}
        }{
          \sstate{p \parallel q}
          \trans{a, \true, u_p \wedge u_q, n_p \wedge
            n_q, n_p' \wedge n_q', r_p \wedge r_p'}
          \sstate{p' \parallel q'}
        }{srule:action:sync-pc}
      \end{minipage} &
      \begin{minipage}{.450\linewidth}
        \Sosrule{
          \sstate{p} \trans{a, b, u_p, n_p, n_p', r} \sstate{p'},
          \sstate{q} \etrans{u_q, n_q, A} \sstate{q'}, a \notin A
        }{
          \sstate{p \parallel q} \trans{a, b, u_p \wedge u_q, n_p
            \wedge n_q, n_p' \wedge n_q, r}
          \sstate{p' \parallel q'}
        }{srule:action:interleaving-pc}
      \end{minipage} \\ & \\
      \begin{minipage}{0.45\linewidth}
        \Sosrule{
          \sstate{p}
          \setrans{u_p, n_p, A_p}
          \sstate{p'},
          \sstate{q}
          \setrans{u_q, n_q, A_q}
          \sstate{q'}
        }{
          \sstate{p \parallel q}
          \setrans{u_p \wedge u_q, n_p \wedge n_q, A_p \cup A_q}
          \sstate{p' \parallel q'}
        }{rule:environment:pc}
      \end{minipage} &
      \begin{minipage}{0.45\linewidth}
        \Sosrule{
          \sstate{p}
          \satrans{a, b, u, n, n', r}
          \sstate{p'}
        }{
          \sstate{\sync{A}{p}}
          \satrans{a, b \vee a \in A, u, n, n', r}
          \sstate{\sync{A}{p'}}
        }{srule:action:sync}
      \end{minipage}\\ & \\
      \multicolumn{2}{c}{
        \begin{minipage}{1\linewidth}
          \Sosrule{
            \sstate{p}\setrans{u,n,A'}\sstate{p'}
          }{
            \sstate{\sync{A}{p}}
            \setrans{u,n,A \cup A'}
            \sstate{\sync{A}{p'}}
          }{srule:environment:sync}
        \end{minipage}
      }\\ & \\
  \end{tabular}
  \caption{Symbolic rules for \CIF}
  \label{tab:symbolic-rules}
\end{table}

The explicit and symbolic rules are related by the following soundness
and completeness theorems.
These theorems state how an explicit transition system can be
reconstructed from its symbolic version, and vice-versa. 

\begin{theorem}[Soundness of action transitions]
  \label{theo:soundness-atrans}
  For all $p$, $p'$, $a$, $b$, $u$, $n$, $n'$, $r$,
  $\sigma$, and $\sigma'$ we have that if the following conditions
  hold:
  \begin{enumerate}
  \item $\sstate{p} \satrans{a, b, u, n, n', r} \sstate{p'}$
  \item
   $\sigma \models u$,
   $\sigma \models n$,
   $\sigma' \models n'$, and  $\sigma'^+ \cup \sigma \models r$
 \item $ \langle \forall x :: x^+ \notin \textrm{FV}(r) \Rightarrow
   \sigma (x) = \sigma'(x) \rangle$
  \end{enumerate}
  then, there is a explicit action transition
  $(p, \sigma) \trans{a, b} (p', \sigma')$.
\end{theorem}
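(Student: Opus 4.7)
The plan is to proceed by structural induction on the derivation of the symbolic transition $\sstate{p} \satrans{a, b, u, n, n', r} \sstate{p'}$ given by assumption (1), with a case split on the last rule applied. Since only rules producing an action transition apply, the cases are Rule~\ref{srule:action:automata}, Rule~\ref{srule:action:sync-pc}, Rule~\ref{srule:action:interleaving-pc}, and Rule~\ref{srule:action:sync}. In each inductive case the target is to construct a derivation in Table~\ref{tab:explicit-rules} whose premises can be obtained from the IH applied to the symbolic premises.

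For the base case (Rule~\ref{srule:action:automata}) we have $p = \alpha$, $p' = \alpha[v']$ with $(v,a,r,v') \in E$, and the labels instantiate as $b = (a \in \actS)$, $u = \actv(v)$, $n = \inv(v)$, $n' = \inv(v')$. The premises of Rule~\ref{rule:action:automata} then match directly: $\sigma \models \actv(v)\wedge\inv(v)$ follows from $\sigma \models u \wedge n$; $\sigma' \models \inv(v')$ from $\sigma' \models n'$; and $\sigma'^+\cup\sigma \models r$ together with the frame condition are literally clauses of assumptions (2) and (3).

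For the inductive cases the strategy is the same: extract the symbolic premises, feed them to the IH, and reassemble with the matching explicit rule. Case srule:action:sync-pc invokes the IH twice and then Rule~\ref{rule:action:sync-pc}; case srule:action:sync applies the IH once and rebuilds the conclusion via Rule~\ref{rule:action:synchronization0}, tracking how $b$ becomes $b \vee (a \in A)$. Case srule:action:interleaving-pc also needs the analogous soundness statement for environment transitions in order to convert the symbolic environment premise on $q$ into an explicit one; the cleanest treatment is to prove that companion lemma by mutual induction with the present theorem.

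The main obstacle lies in discharging the frame hypothesis of the IH in the parallel composition cases. Since the composite reset is $r = r_p \wedge r_q$, we have $\mathrm{FV}(r_p) \subseteq \mathrm{FV}(r)$, so assumption (3) at the composite level gives only $x^+ \notin \mathrm{FV}(r) \Rightarrow \sigma(x)=\sigma'(x)$, which is strictly weaker than the $x^+ \notin \mathrm{FV}(r_p) \Rightarrow \sigma(x)=\sigma'(x)$ required by the IH on the $p$-premise. Bridging this gap requires either a well-formedness assumption ensuring that $p$ and $q$ have disjoint writable variables (so that a variable omitted from $r_p$ but present in $r_q$ cannot exist), or a strengthening of the induction hypothesis that separates the frame condition from the reset predicate. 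The remaining verifications---splitting $\sigma \models u_p \wedge u_q$, $\sigma \models n_p \wedge n_q$, $\sigma' \models n_p' \wedge n_q'$, and $\sigma'^+\cup\sigma \models r_p \wedge r_q$ into their conjuncts, and analogously for the interleaving case---are entirely routine.
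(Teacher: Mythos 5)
The paper never spells out its proof of this theorem, but induction on the derivation of the symbolic transition is clearly the intended route, and your base case, your case for the synchronizing-action operator, and your interleaving case all go through as you describe. (One small remark: mutual induction with the environment-transition soundness result is unnecessary; since no rule for environment transitions has an action transition among its premises, Theorem~\ref{theo:soundness-etrans} can be proved separately and then simply invoked in the interleaving case, and note that there the composite reset is exactly $r$, so the frame hypothesis transfers verbatim.) The substantive point is the synchronizing parallel case, and you are right to single it out --- but your proposed repair does not work, and this is where the proposal has a genuine gap.

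Disjointness of the variables written by $p$ and $q$ is not a cure but precisely the problematic configuration: the IH for the $p$-premise needs $\sigma(x)=\sigma'(x)$ for every $x$ with $x^+\in\mathrm{FV}(r_q)\setminus\mathrm{FV}(r_p)$, and it is exactly when the written variables are disjoint that such $x$ exist; what would close the gap is the reverse containment (each component's reset mentions, in primed form, every variable the other may change), which is not a reasonable well-formedness condition. Worse, with the labels as given the sync case cannot be pushed through at all, because the implication fails there: take two automata whose edges carry resets $r_p\equiv\true$ and $r_q\equiv x^+=x+1$ on a common synchronizing action $a$ (all initial predicates and invariants $\true$), and take $\sigma'$ equal to $\sigma$ except $\sigma'(x)=\sigma(x)+1$. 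All hypotheses of the theorem hold for the symbolic transition produced by Rule~\ref{srule:action:sync-pc}, whose reset is $r_p\wedge r_q$, yet Rule~\ref{rule:action:automata} applied to the first automaton forces $\sigma(x)=\sigma'(x)$, so neither premise configuration of Rule~\ref{rule:action:sync-pc} nor of Rule~\ref{rule:action:interleaving-pc} is derivable and no explicit transition to $\sigma'$ exists. So completing the argument requires changing what the symbolic transition records --- e.g.\ carrying a per-component set of variables allowed to jump (as full CIF does) or conjoining the frame equalities $x^+=x$ for unwritten variables into the reset in the automaton rule --- so that the composite frame hypothesis really does imply the componentwise ones; merely ``strengthening the induction hypothesis'' cannot help, since the failure is in the statement for the composite term, not in the induction bookkeeping.
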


\begin{theorem}[Completeness of action transitions]
  \label{theo:completeness-atrans}
  For all $p$, $p'$, $a$, $b$, $\sigma$, and $\sigma'$ we have
  that if there is a explicit transition
  $(p, \sigma) \trans{a, b} (p', \sigma')$
  then there exists $u$, $n$, $n'$, and $r$ such that the following
  conditions hold:
  \begin{enumerate}
  \item $\sstate{p} \satrans{a,
      b, u, n, n', r} \sstate{p'}$
  \item $\sigma \models u$, $\sigma \models n$,
    $\sigma' \models n'$, and $\sigma'^+ \cup \sigma \models r$
  \item $\langle \forall x :: x^+ \notin \textrm{FV}(r) \Rightarrow
   \sigma (x) = \sigma'(x) \rangle$
  \end{enumerate}
\end{theorem}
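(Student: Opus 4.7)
The plan is to proceed by induction on the derivation of the explicit action transition $(p,\sigma) \trans{a,b} (p',\sigma')$, mirroring the structure of the symbolic rules in Table~\ref{tab:symbolic-rules}. For each explicit rule, I identify the symbolic rule with the same conclusion shape, read off the required witnesses $u$, $n$, $n'$, $r$ from the symbolic premises, and verify that the explicit premises entail the three conditions of the theorem.

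In the base case, the last rule applied is Rule~\ref{rule:action:automata}, so $p = \alpha$, and the derivation gives an edge $(v,a,r,v') \in E$ together with $\sigma \models \actv(v) \wedge \inv(v)$, $\sigma' \models \inv(v')$, $\sigma'^+ \cup \sigma \models r$, and the frame condition on variables outside $\textrm{FV}(r)$. Rule~\ref{srule:action:automata} directly yields $\sstate{\alpha} \satrans{a, a \in \actS, \actv(v), \inv(v), \inv(v'), r} \sstate{\alpha[v']}$, so I take $u = \actv(v)$, $n = \inv(v)$, $n' = \inv(v')$, and keep $r$ as it is; the three conditions then follow directly from the premises of the explicit rule (and the fact that $b = a \in \actS$ in both rules).

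The inductive cases split on the last explicit rule. For Rule~\ref{rule:action:sync-pc}, the induction hypothesis applied to both premises yields symbolic transitions for $\sstate{p}$ and $\sstate{q}$ with respective witnesses $(u_p,n_p,n_p',r_p)$ and $(u_q,n_q,n_q',r_q)$; applying Rule~\ref{srule:action:sync-pc} gives the required symbolic transition, and the conditions follow because $\sigma \models u_p \wedge u_q$ iff $\sigma \models u_p$ and $\sigma \models u_q$, and similarly for $n$, $n'$, and $r$. The synchronizing-action case (Rule~\ref{rule:action:synchronization0}) is handled by the induction hypothesis and Rule~\ref{srule:action:sync}, using that the boolean label propagates unchanged while being disjoined with $a \in A$.

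The main obstacle is the interleaving case (Rule~\ref{rule:action:interleaving-pc}), whose symbolic counterpart (Rule~\ref{srule:action:interleaving-pc}) requires an \emph{environment} transition for $q$. So the proof cannot be carried out in isolation: it must be stated and proved simultaneously with a companion completeness lemma for environment transitions, asserting that whenever $(q,\sigma) \etrans{A} (q',\sigma')$ there exist $u_q, n_q$ with $\sstate{q} \setrans{u_q, n_q, A} \sstate{q'}$, $\sigma \models u_q$, and $\sigma' \models n_q$. This companion lemma is proved by a parallel induction, with base case from Rule~\ref{rule:environment:automata} producing $u_q = \actv(v)$, $n_q = \inv(v)$, and inductive cases via Rule~\ref{rule:environment:pc} and Rule~\ref{srule:environment:sync}. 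Once this is available, the interleaving case closes immediately: the induction hypothesis gives a symbolic action transition for $\sstate{p}$, the companion lemma gives a symbolic environment transition for $\sstate{q}$, and Rule~\ref{srule:action:interleaving-pc} assembles them, with the frame condition on $r$ inherited from the premise for $p$ since $q$ contributes no reset predicate.
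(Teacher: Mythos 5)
Your proof is correct and follows essentially the approach the paper indicates (the paper omits the proof, noting only that it goes by structural induction, which for these syntax-directed rules coincides with your induction on the derivation). You rightly identify that the interleaving case forces a simultaneous appeal to completeness of environment transitions --- this companion lemma is exactly the paper's Theorem~\ref{theo:completeness-etrans}, so no genuinely new machinery is needed.
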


\begin{theorem}[Soundness of environment transitions]
  \label{theo:soundness-etrans}
  For all $p$, $p'$, $u$, $A$,
  $\sigma$, and $\sigma'$ we have that if the following conditions
  hold:
  \begin{enumerate}
  \item $\sstate{p} \setrans{u, n, A} \sstate{p'}$
  \item $\sigma \models u$, $\sigma \models n$, $\sigma' \models n$
  \end{enumerate}
  then, there is a explicit environment transition
  $(p, \sigma) \etrans{A} (p', \sigma')$.
\end{theorem}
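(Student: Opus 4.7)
The plan is to prove the claim by structural induction on the composition $p$ (equivalently, by induction on the derivation of the symbolic environment transition $\sstate{p} \setrans{u, n, A} \sstate{p'}$), showing that in each case an appropriate explicit environment transition can be constructed using the rules of Table~\ref{tab:explicit-rules}. Since every symbolic rule for environment transitions in Table~\ref{tab:symbolic-rules} has a direct explicit counterpart, the induction should go through without surprises.

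For the base case, $p$ is an automaton $\alpha = (V, \actv, \inv, E, \actS)$. The only symbolic rule that can derive $\sstate{\alpha} \setrans{u, n, A} \sstate{p'}$ is \ref{srule:environment:automata}, which forces $u = \actv(v)$, $n = \inv(v)$, $A = \actS$, and $p' = \alpha[v]$ for some $v \in V$. From hypothesis~(2), $\sigma \models \actv(v) \wedge \inv(v)$ and $\sigma' \models \inv(v)$, which are exactly the premises of the explicit rule~\ref{rule:environment:automata}, yielding $(\alpha, \sigma) \etrans{\actS} (\alpha[v], \sigma')$.

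For the inductive step, I would consider the two remaining shapes of $p$. If $p = p_1 \parallel p_2$, then the derivation must end with rule~\ref{rule:environment:pc} (its symbolic instance), so there exist $u_1, u_2, n_1, n_2, A_1, A_2$ with $u = u_1 \wedge u_2$, $n = n_1 \wedge n_2$, $A = A_1 \cup A_2$, and symbolic premises $\sstate{p_i} \setrans{u_i, n_i, A_i} \sstate{p_i'}$. From $\sigma \models u$, $\sigma \models n$ and $\sigma' \models n$ one reads off $\sigma \models u_i$, $\sigma \models n_i$, $\sigma' \models n_i$ for $i \in \{1,2\}$, so the induction hypothesis gives explicit transitions $(p_i, \sigma) \etrans{A_i} (p_i', \sigma')$, and rule~\ref{rule:environment:pc} combines them. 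If $p = \sync{B}{q}$, the last rule is \ref{srule:environment:sync}, giving $\sstate{q} \setrans{u, n, A'} \sstate{q'}$, $p' = \sync{B}{q'}$ and $A = B \cup A'$; applying the induction hypothesis to $q$ and then rule~\ref{rule:consistency:synchronization} closes the case.

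Overall I do not expect a genuine obstacle: conditions on $u$, $n$ and $n'$ distribute over conjunction, and the labels $A$ distribute over set union exactly the way the rules prescribe. The only place that requires mild attention is to check that the premises of the symbolic rule are fully matched by the hypotheses, in particular that $\sigma' \models n$ is strong enough to supply $\sigma' \models n_i$ in the parallel case and $\sigma' \models \inv(v)$ in the automaton case, which is immediate since the respective $n$'s are literally conjuncts of the outer $n$.
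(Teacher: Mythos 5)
Your proof is correct and follows exactly the route the paper intends (the paper omits the proof itself): induction on the derivation of the symbolic environment transition, matching each symbolic rule with its explicit counterpart and splitting the satisfaction hypotheses over the conjuncts and unions. The case analysis is exhaustive and each step checks out against Tables~\ref{tab:explicit-rules} and~\ref{tab:symbolic-rules}.
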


\begin{theorem}[Completeness of environment transitions]
  \label{theo:completeness-etrans}
  For all $p$, $p'$, $A$, $\sigma$, and $\sigma'$ we have that if
  there is an explicit transition $(p, \sigma) \etrans{A} (p',
  \sigma')$ then there exists $u$, and $n$ such that the following
  conditions hold:
  \begin{enumerate}
  \item $\sstate{p} \setrans{u, n, A} \sstate{p'}$
  \item $\sigma \models u$, $\sigma \models n$, $\sigma' \models n$
  \end{enumerate}
\end{theorem}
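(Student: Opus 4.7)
My plan is to prove the theorem by induction on the derivation of the explicit environment transition $(p, \sigma) \etrans{A} (p', \sigma')$, which corresponds to structural induction on $p \in \compositions$. Each of the three productions of the grammar for compositions is matched by exactly one explicit and one symbolic environment rule, so inversion of the premise is deterministic and the case analysis is tight.

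For the base case $p = \alpha$, inversion of the explicit automaton rule yields some $v \in V$ with $p' = \alpha[v]$, $A = \actS$, $\sigma \models \actv(v) \wedge \inv(v)$, and $\sigma' \models \inv(v)$. Taking $u \triangleq \actv(v)$ and $n \triangleq \inv(v)$, the symbolic automaton rule applied to the same $v$ produces $\sstate{\alpha} \setrans{u, n, \actS} \sstate{\alpha[v]}$, and the three satisfaction conditions $\sigma \models u$, $\sigma \models n$, $\sigma' \models n$ follow immediately from the premises.

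For the inductive step $p = p_1 \parallel p_2$, inversion of the explicit parallel environment rule gives $A = A_1 \cup A_2$ and two smaller derivations $(p_i, \sigma) \etrans{A_i} (p_i', \sigma')$ sharing the same initial and final valuations. The induction hypothesis supplies $u_i, n_i$ with $\sstate{p_i} \setrans{u_i, n_i, A_i} \sstate{p_i'}$, $\sigma \models u_i \wedge n_i$, and $\sigma' \models n_i$. Setting $u \triangleq u_1 \wedge u_2$ and $n \triangleq n_1 \wedge n_2$ and applying the symbolic parallel environment rule discharges the case, since conjunction of predicates is preserved under the shared $\sigma$ and $\sigma'$. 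The synchronization case $p = \sync{B}{q}$ is entirely analogous: the explicit scope rule gives $A = B \cup A'$ with a single premise on $q$, the induction hypothesis supplies the witnesses $u, n$, and the symbolic scope rule reuses them verbatim while adjoining $B$ to the synchronizing-action component.

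I do not anticipate a real obstacle. The key observation making the argument go through without friction is that the symbolic environment rules mirror their explicit counterparts one-for-one, replacing point-wise satisfaction of $\actv$ and $\inv$ by the very predicates themselves; this is what allows the witnesses to combine by plain conjunction in the parallel case, with no side condition about disjointness of state spaces or renaming of free variables. The only bookkeeping point to watch is that the target composition on the symbolic side must match $p'$ syntactically, which is immediate because each rule preserves the structure of $p$ identically on the explicit and symbolic sides.
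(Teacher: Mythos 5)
Your proof is correct: induction on the derivation (equivalently, on the structure of $p$), inverting the matching explicit environment rule in each case and reassembling the witnesses $u$, $n$ by the corresponding symbolic rule, is exactly the argument this result calls for, and each of your three cases checks out against Rules~\ref{rule:environment:automata}, \ref{rule:environment:pc}, \ref{rule:consistency:synchronization} and their symbolic counterparts. The paper itself states this theorem without proof, but your route is the standard one it implicitly relies on, so there is nothing to add.
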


It is not hard to see that given a \CIF\ composition, the symbolic
rules induce a \emph{finite transition system}. For the model of the
train gate presented in Section~\ref{sec:setting-scene}, a part of its
associated symbolic transition system is shown in
Figure~\ref{fig:sts-gateway-controller} (the whole transition system
contains 16 states), where we use the convention that for all $x$,
$y$, $z$:
\begin{equation*}
  \sstate{x, y, z} \equiv \sync{\{\mathit{rq}, \mathit{go},
    \mathit{out}\}}{\mathit{Train(0)}[x] \parallel
    \mathit{Train(1)}[y]} \parallel \mathit{Gate}[z]
\end{equation*}
In this transition system, two problems can be noted. The size of the
symbolic transition system grows exponentially as more trains are
added. This is the result of the interleaving actions that are
executed between these models.
Secondly, there is a great deal of \emph{redundant information}. The
invariants of the source and the target states are present not only in
the labels of action transitions, but also in the environment
transition of these
states. Similarly, the initialization conditions are meaningful only
for the initial environment transition. For the remaining environment
transitions in the systems, the initialization predicate is always
true. In the next section we show how to overcome these
problems using a new kind of symbolic rules.

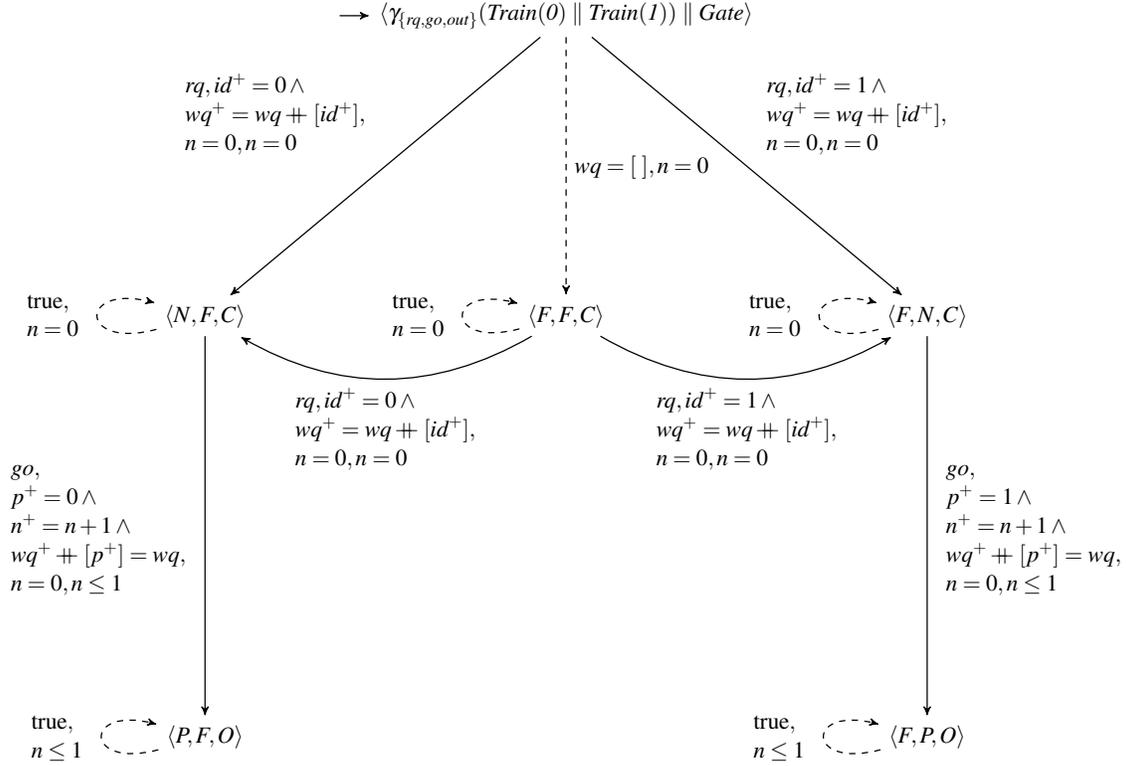
\begin{figure}[htb]
  \centering
  \scalebox{0.8}{
    \begin{tikzpicture}[->,>=stealth',shorten >=1pt,auto, semithick,
  initial text=, node distance=13em]

  \node[initial] (start) { $\sstate{ \sync{\{\mathit{rq}, \mathit{go},
      \mathit{out}\}}{\mathit{Train(0)} \parallel
      \mathit{Train(1)}}\parallel \mathit{Gate}}
    $ };

  \node (1) at ($(start.center)+(-6, -5)$) {$\sstate{N, F, C}$};

  \node (2) at ($(start.center)+(0, -5)$) {$\sstate{F, F, C}$};

  \node (3) at ($(start.center)+(6, -5)$) {$\sstate{F, N, C}$};

  \node (4) at ($(start.center)+(-6, -12)$) {$\sstate{P, F, O}$};


  \node (6) at ($(start.center)+(6, -12)$) {$\sstate{F, P, O}$};

  \path (start) edge [dashed] node {$\mathit{wq}=[\ ], n = 0$} (2);

  \path (start) edge node [above left] {$
    \begin{array}{l}
      \mathit{rq},
      \mathit{id}^+=0 \wedge{}\\
      \mathit{wq}^+=\mathit{wq} \concat [\mathit{id}^+],\\
      n=0, n=0
    \end{array}
    $} (1);

  \path (start) edge node [above right] {$
    \begin{array}{l}
      \mathit{rq},
      \mathit{id}^+=1 \wedge{}\\
      \mathit{wq}^+=\mathit{wq} \concat [\mathit{id}^+],\\
      n=0, n=0
    \end{array}
    $} (3);

  \path (1) edge [dashed,loop left] node {$
    \begin{array}{l}
      \true, \\ n=0
    \end{array}
    $} (1);

  \path (4) edge [dashed,loop left] node {$
    \begin{array}{l}
      \true, \\ n \leq 1
    \end{array}
    $} (4);

  \path (6) edge [dashed,loop left] node {$
    \begin{array}{l}
      \true, \\ n \leq 1
    \end{array}
    $} (6);

  \path (2) edge [dashed,loop left] node {$ \begin{array}{l} \true, \\
      n=0
    \end{array}$} (2);

  \path (3) edge [dashed,loop left] node {$
    \begin{array}{l}
      \true, \\ n=0
    \end{array}
    $} (3);

  \path (2) edge [bend left] node [below] {$
    \begin{array}{l}
      \mathit{rq},
      \mathit{id}^+=0 \wedge{}\\
      \mathit{wq}^+=\mathit{wq} \concat [\mathit{id}^+],\\
      n=0, n=0
    \end{array}
    $} (1);

  \path (2) edge [bend right] node [below] {$
    \begin{array}{l}
      \mathit{rq},
      \mathit{id}^+=1 \wedge{}\\
      \mathit{wq}^+=\mathit{wq} \concat [\mathit{id}^+],\\
      n=0, n=0
    \end{array}
    $} (3);

  \path (1) edge node [left] {
    $
    \begin{array}{l}
      \mathit{go},\\
      p^+=0 \wedge{} \\ n^+=n+1 \wedge{} \\
      \mathit{wq}^+ \concat [p^+] = \mathit{wq},\\
      n = 0, n \leq 1
    \end{array}
    $
  } (4);

  \path (3) edge node [right] {
    $
    \begin{array}{l}
      \mathit{go},\\
      p^+=1 \wedge{} \\ n^+=n+1 \wedge{} \\
      \mathit{wq}^+ \concat [p^+] = \mathit{wq},\\
      n = 0, n \leq 1
    \end{array}
    $
  } (6);
\end{tikzpicture}
  }
  \caption{A part of the symbolic transition system for the train gate
    controller}
  \label{fig:sts-gateway-controller}
\end{figure}

\section{Linear Transition Systems}
\label{sec:linear-transition-systems}

In this section we define a structure called \emph{linear transition
  system} (LiTS),
which contains all the information necessary to represent
any arbitrary \CIF\ composition, and that can be
translated to an equivalent automaton.

Consider the symbolic transition system of the train gate model. In
Figure~\ref{fig:sts-gateway-controller}, we show a transition of the
form:
\begin{equation*}
  \sstate{N, F, C} \trans{\mathit{go},
      p^+=0 \wedge n^+=n+1 \wedge
      \mathit{wq}^+ \concat [p^+] = \mathit{wq},
      n = 0, n \leq 1} \sstate{P, F, O}
\end{equation*}
The complete symbolic transition system also contains these
transitions:
\begin{flalign*}
  \sstate{N, N, C} \trans{\mathit{go},
    p^+=0 \wedge n^+=n+1 \wedge
    \mathit{wq}^+ \concat [p^+] = \mathit{wq},
    n = 0, n \leq 1} \sstate{P, N, O}\\
  \sstate{N, S, C} \trans{\mathit{go},
    p^+=0 \wedge n^+=n+1 \wedge
    \mathit{wq}^+ \concat [p^+] = \mathit{wq},
    n = 0, n \leq 1} \sstate{P, S, O}
\end{flalign*}
These three transitions only differ in the second component of the
symbolic state, that is, the location in which the second train is.
However, this information is not relevant for computing the state
change. If we replace the above transitions by a unique transition of
the form:
\begin{equation*}
  \sstate{N, \_, C} \trans{\mathit{go},
      p^+=0 \wedge n^+=n+1 \wedge
      \mathit{wq}^+ \concat [p^+] = \mathit{wq},
      n = 0, n \leq 1} \sstate{P, \_, O}
\end{equation*}
then we can avoid the state explosion caused by the interleaving
actions. Here the \emph{wild-card} symbol $\_$ can be read as ``for
any location''.

Furthermore, in Figure~\ref{fig:sts-gateway-controller} we see that
there is no need to replicate the entire structure in a given
transition, since it suffices to keep track of the locations that
change.

From the observation above, we want a linear transition system where
the states are \emph{sequences of locations}, containing also
\emph{wild-cards}. These wild-cards are used to denote the fact that
the location of a certain automaton does not change in the transition.
Formally the states of the LiTS belong to the set
\begin{equation}
  \label{eq:1}
  (\locations \times \{\_\})^*
\end{equation}
where $\_$ is the wild-card symbol, and $A^*$ is the set of all
sequences whose elements are taken from the set $A$. An example of
such state is the list $[F, \_, C]$.

The next thing to define is the transitions of the LiTS's, in such a
way that the redundancy introduced by the STS's is eliminated. To
accomplish this, we split action and environment transitions into
several transitions, which are described next.

\begin{description}
\item[Action Transitions] They are of the form
$
  p \models \sstate{ \xs v} \trans{a, r} \sstate{\xs v'}
$,
where $p \in \compositions$ is a composition, $a \in \actionstau$ is
an action label, and $r \in \predicates$ is the update predicate
associated to the action.
\item[Synchronizing Actions] They are of the form
$
  p \saof A
$,
where $p \in \compositions$ is a composition, and $A \subseteq
\actions$ is the set of synchronizing actions of $p$.
\item[Initialization Transitions] They are of the form $p \ipredof \xs
  f$, where $p \in \compositions$ and $\xs f \in (\locations
  \rightharpoonup \predicates)^*$ is a list containing the
  initialization predicate function of each automaton in $p$.
\item[Invariant Transitions] They are of the form $ p \invof \xs f $,
  where $p \in \compositions$ is a composition, and $\xs f \in
  (\locations \rightarrow \predicates)^*$ is a list containing the
  invariant function associated to each automaton in $p$.

  The reader may have expected initialization or invariant
    transitions of the form:
  \begin{equation*}
    \xs v \invof p
  \end{equation*}
  where $\xs v$ is a list of locations, and $p$ is a predicate.
  However this approach requires enumerating the state space
  explicitly to construct the $\invof$ relation. By using lists of
  functions we avoid this explicit construction.
\item[Wild-card Transitions] They are of the form $ p \compsof \xs x $,
  where $p \in \compositions$ is a composition, and $\xs x \in (\_)^*$
  is a sequence of wild-cards whose size coincides with the number of
  automata that are composed in parallel in $p$. These transition are
  not needed for reconstructing the environment transitions, they are
  used in the linear SOS rules to model the fact that nothing changes
  in a component of a parallel composition, when the other component
  performs an action.
\end{description}


In Table~\ref{tab:linear-rules} we show some of the linear SOS rules for
\CIF\ compositions. We have omitted the rules for synchronizing
actions, initialization, and wild-card transitions since they are
similar to the invariant transitions.

The linear rules can be easily to obtained from the symbolic
  ones. For action rules, invariants and initialization predicates,
and the synchronizing action label are simply omitted (since they can
be obtained from other transitions). The linear rule for
interleaving parallel composition is almost identical to the symbolic
rule. The only differences are that the set $A$ is obtained from a
$\saof$ transition, and we use the wild-card transition to represent
the fact that the locations of the other automaton are not relevant
(at the symbolic level at least).
A similar observation can be made for the rule for parallel
composition. In this case since we do not have the synchronizing
label, we reconstruct it from the $\saof$ transition. This label is
equivalent to $a \in A$, thus a label $\true$ in both components
is equivalent to $a \in A_p \wedge a \in A_q$, which is in turn equivalent
to $a \in A_p \cap A_q$.

\begin{table}[htb]
  \centering
  \begin{tabular}{c c}
    \begin{minipage}{.40\linewidth}
      \Sosrule{}
      {(V, \actv, \inv, \tcp, E, \actS)\invof  [\inv]}
      {lrule:inv:atomicaut}
    \end{minipage}&
    \begin{minipage}{.40\linewidth}
      \Sosrule{p \invof \xs f_p, q \invof \xs f_q}
      {p \parallel q \invof \xs f_p \concat \xs f_q}
      {lrule:inv:pcomp}
    \end{minipage}\\ & \\
    \begin{minipage}{.40\linewidth}
      \Sosrule{
        (v, a, r, v') \in E
      }
      {(V, \actv, \inv, \tcp, E, \actS)\models
        \sstate{[v]}\trans{a,r}\sstate{[v']}}
      {lrule:action:atomicaut}
    \end{minipage} &
    \begin{minipage}{.40\linewidth}
      \Sosrule{
        p \models \sstate{\xs v}\trans{a, r}\sstate{\xs v'},
        q \saof A,
        q \compsof \_,
        a \notin A,
      }{
        p \parallel q \models \sstate{\xs v \concat \_}
        \trans{a, r}\sstate{\xs v' \concat \_ }
      }
      {lrule:action:interleaving-pc}
    \end{minipage}\\ & \\
    \multicolumn{2}{c}{
      \begin{minipage}{0.9\linewidth}
        \Sosrule{
          p \models \sstate{\xs v_p} \trans{a, r_p} \sstate{\xs v_p'},
          q \models \sstate{\xs v_q} \trans{a, r_q} \sstate{\xs v_q'},
          p \saof A_p, q \saof A_q, a \in A_p \cap A_q
        }{
          p \parallel q \models
          \sstate{\xs v_p \concat \xs v_q}
          \trans{a, r_p \wedge r_q}
          \sstate{\xs v_p' \concat \xs v_q'}
        }{lrule:action:sync-pc}
      \end{minipage}
    }\\ & \\
    \begin{minipage}{0.45\linewidth}
      \Sosrule{
        p \invof \xs f
      }{
        \sync{A}{p} \invof \xs f
      }{lrule:inv:sync}
    \end{minipage}&
    \begin{minipage}{0.45\linewidth}
      \Sosrule{
        p \models \sstate{\xs v} \trans{a, r} \sstate{\xs v'}
      }{
        \sync{A}{p} \models \sstate{\xs v} \trans{a, r} \sstate{\xs v'}
      }{lrule:action:sync}
    \end{minipage}
    \\ & \\
  \end{tabular}
  \caption{Linear SOS rules for \CIF\ compositions}
  \label{tab:linear-rules}
\end{table}

If a composition $p$ contains no synchronizing actions, then the size
of its induced transition system is linear w.r.t. the size of $p$.
However, the size of the LiTS also depends on the number of
synchronizing actions. The following property gives the formal
details.

\begin{prop}[Size of the linear transition system]
  Let $p$ be a \CIF\ composition, such that it contains $n$ automata
  $\alpha_i \equiv (V_i, \actv_i, \inv_i, E, {\actS}_i)$, $0 \leq i <
  n$. Let $a$ be the only synchronizing action in these automata.
  Then the number of transitions in the LiTS associated to $p$ is
  given by:
  \begin{equation}
    \label{eq:11}
    \displaystyle\sum_{0 \leq i < n} \#\{ x\ |\ (v, g, x, u, v' ) \in
    E_i \wedge x \neq a\} + \displaystyle\prod_{0 \leq i < n} \#\{ x\ |\ (v,
    g, x, u, v' ) \in E_i \wedge x = a\}
  \end{equation}
  where $\#A$ is the number of elements in set $A$.
\end{prop}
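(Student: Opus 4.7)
The plan is to argue by structural induction on the composition $p$. For brevity write $N_i = \#\{x \mid (v,g,x,u,v') \in E_i \wedge x \neq a\}$ and $S_i = \#\{x \mid (v,g,x,u,v') \in E_i \wedge x = a\}$, so the target quantity is $\sum_i N_i + \prod_i S_i$. In the base case $p$ is a single automaton, hence $n = 1$, and rule \ref{lrule:action:atomicaut} creates exactly one action transition per edge of $E_0$; this yields $|E_0| = N_0 + S_0$ transitions, matching the right-hand side in which the sum has a single term and the product a single factor.

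For the case $p = p_1 \parallel p_2$, I partition the automaton indices $\{0,\ldots,n-1\}$ into $I_1$ and $I_2$ according to which operand each atom belongs to. By the induction hypothesis each $p_j$ has $\sum_{i \in I_j} N_i + \prod_{i \in I_j} S_i$ action transitions, of which $\sum_{i \in I_j} N_i$ carry a label distinct from $a$ and $\prod_{i \in I_j} S_i$ carry the label $a$. Combined with a side lemma asserting $p_j \saof \{a\}$, derived from the hypothesis on the $\actS_i$ and rules for $\saof$ analogous to those for $\invof$, rule \ref{lrule:action:interleaving-pc} fires exactly for the non-$a$ transitions on either side, contributing $\sum_{i \in I_1} N_i + \sum_{i \in I_2} N_i$ transitions. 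Rule \ref{lrule:action:sync-pc} then pairs every $a$-labeled transition of $p_1$ with every $a$-labeled transition of $p_2$, producing $\prod_{i \in I_1} S_i \cdot \prod_{i \in I_2} S_i = \prod_{0 \leq i < n} S_i$ transitions. Adding these contributions gives the required total.

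For $p = \sync{A}{p'}$, rule \ref{lrule:action:sync} sets up a one-to-one correspondence between action transitions of $p'$ and of $p$, so the induction hypothesis applied to $p'$ transfers directly without changing the count.

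The main obstacle is the auxiliary lemma characterizing $\saof$, whose rules are not shown in the excerpt. I must verify that for any sub-composition built over atoms satisfying the hypothesis the derived synchronizing-action set is $\{a\}$, so that the side conditions of rules \ref{lrule:action:interleaving-pc} and \ref{lrule:action:sync-pc} select exactly the intended transitions; a minor subtlety is the degenerate case in which some $\actS_i$ is empty, which must be handled consistently with the convention that then $S_i$ does not participate as a synchronizing contribution. Once these bookkeeping issues are settled, the remainder of the argument is a routine partition of the rule instances together with elementary arithmetic, and no further combinatorial difficulty arises.
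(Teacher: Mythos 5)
Your induction is sound, and it is worth noting that the paper itself states this Property without any proof, so there is nothing to compare your route against; a structural induction of the kind you give is precisely the justification the statement needs. The ingredient that makes it work, and which you state explicitly, is the strengthened induction hypothesis tracking separately the number of transitions of each sub-composition labelled $a$ (the product of the $S_i$ over its atoms) and the number labelled otherwise (the sum of the $N_i$): together with the auxiliary fact $p_j \saof \{a\}$, the interleaving rule then fires exactly once per non-$a$ transition of either operand and the synchronization rule exactly once per pair of $a$-transitions, and the arithmetic closes; the rule for $\sync{A}{\cdot}$ is a label-preserving bijection, so that case is immediate. The caveats you flag are genuine but are resolved by the intended reading of the hypothesis: the claim needs $a \in \actS_i$ for every $i$ (if some $\actS_i$ omits $a$, that automaton's $a$-edges are blocked in the composition rather than interleaved, and the count changes) and it needs every synchronizing-action operator occurring in $p$ to mention no action other than $a$ (an operator $\sync{\{b\}}{\cdot}$ around one component blocks $b$-edges of its siblings, so \eqref{eq:11} would over-count); your auxiliary $\saof$ lemma is then a one-line induction mirroring the $\invof$ rules. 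The only step you pass over silently is that distinct rule instances yield distinct transition judgments, so the contributions really add rather than collapse; this holds because a left-interleaved conclusion has wild-cards in exactly the right-hand block, a right-interleaved one in the left-hand block, and a synchronized one in neither, and the premises are recoverable from the conclusion, but it merits a sentence in a written-out proof.
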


In spite of the fact that the number in~\eqref{eq:11} can be
significantly large, in practice, communication among components is
usually restricted to a few automata, and the number of edges of an
automaton that contain a given synchronizing action $a$ is small.


\subsection{Relating LiTS and STS}
\label{sec:relat-lits-sts}

In the same way symbolic transitions are related to explicit ones via
soundness and completeness results, linear transitions have the same
property w.r.t. symbolic transitions.

The first two results state that a LiTS contains all the necessary
information to reconstruct the environment transitions in the
symbolic transition system and vice-versa. Here ``leads
to transitions'' refers to the
initialization, invariant, and synchronizing actions transitions in
the LiTS. Given a composition $p$, which contains $n$ atomic automata,
and a sequence $\xs l$ of $n$ locations, $p[\xs l]$ is the composition
obtained by replacing the initial predicate function of the $i^{th}$
automaton by $\ipred{\xs l {.} i}$, for $0 \leq i < n$, where $\xs l {.} i$ is
the element of sequence $\xs l$ at position $i$ (sequences are
numbered starting from $0$). $\mathrm{locsof}(p)$ refers to the set of
sequences $\xs l$, where $\# \xs l = n$ and $\xs l {.} i$ is a location
of the $i^{th}$ automaton of composition $p$ ($0 \leq i < n$).

\begin{theorem}[Soundness of leads to transitions]
  For all $p$, $\xs i$, $\xs f$, $\xs g$, $A$, $u$, and $n$ we have
  that if the following conditions hold:
  \begin{enumerate}
  \item $\xs i \in \mathrm{locsof}(p)$
  \item $p \ipredof \xs f$, $p \invof \xs g$, $p \saof A$
  \item $u = \displaystyle \bigwedge_{0 \leq i < \#\xs f} \xs f{.}i
    (\xs i{.}i)$, and $n = \displaystyle \bigwedge_{0 \leq i < \# \xs
      g} \xs g {.} i (\xs i {.} i)$
  \end{enumerate}
  then there is a symbolic transition
$
    \sstate{p} \etrans{u, n, A} \sstate{p[\xs i]}
$.
\end{theorem}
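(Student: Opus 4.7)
The plan is to proceed by structural induction on the composition $p$, mirroring the inductive structure of the LiTS rules in Table~\ref{tab:linear-rules} with the symbolic environment rules in Table~\ref{tab:symbolic-rules}. The key insight is that each LiTS "leads-to" relation ($\ipredof$, $\invof$, $\saof$) has exactly one applicable rule per syntactic form of $p$, and these rules decompose in the same way that the symbolic environment rules do.

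\textbf{Base case: $p = \alpha$ with $\alpha = (V, \actv, \inv, E, \actS)$.} By inversion on rule~\ref{lrule:inv:atomicaut} (and its analogues for $\ipredof$ and $\saof$, which were only sketched in the paper), $\xs f = [\actv]$, $\xs g = [\inv]$, and $A = \actS$. Since $\xs i \in \mathrm{locsof}(p)$, we have $\xs i = [v]$ for some $v \in V$. Then $u = \actv(v)$ and $n = \inv(v)$, so the required symbolic transition $\sstate{\alpha} \etrans{\actv(v), \inv(v), \actS} \sstate{\alpha[v]}$ is an instance of rule~\ref{srule:environment:automata}, and by definition $\alpha[\xs i] = \alpha[v]$.

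\textbf{Parallel composition: $p = p_1 \parallel p_2$.} Inverting rule~\ref{lrule:inv:pcomp} (and its counterparts for $\ipredof$ and $\saof$) gives $\xs f = \xs f_1 \concat \xs f_2$, $\xs g = \xs g_1 \concat \xs g_2$, and $A = A_1 \cup A_2$, with $p_j \ipredof \xs f_j$, $p_j \invof \xs g_j$, $p_j \saof A_j$ for $j \in \{1,2\}$. From $\xs i \in \mathrm{locsof}(p)$ one splits $\xs i = \xs i_1 \concat \xs i_2$ with $\xs i_j \in \mathrm{locsof}(p_j)$ of the appropriate lengths. The conjunctions split accordingly: $u = u_1 \wedge u_2$ and $n = n_1 \wedge n_2$, where $u_j, n_j$ are built from $\xs f_j, \xs g_j, \xs i_j$ in the obvious way. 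Applying the induction hypothesis to $p_1$ and $p_2$ yields two symbolic environment transitions, which combine via rule~\ref{rule:environment:pc} to give $\sstate{p_1 \parallel p_2} \etrans{u, n, A_1 \cup A_2} \sstate{p_1[\xs i_1] \parallel p_2[\xs i_2]}$. Unfolding the definition of $p[\xs i]$ shows the target is $\sstate{(p_1 \parallel p_2)[\xs i]}$.

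\textbf{Synchronizing action operator: $p = \sync{A'}{q}$.} Inversion on the linear rules gives $q \ipredof \xs f$, $q \invof \xs g$, and $q \saof A''$ with $A = A' \cup A''$; the location lists are unchanged. The induction hypothesis gives $\sstate{q} \etrans{u, n, A''} \sstate{q[\xs i]}$, and rule~\ref{srule:environment:sync} yields the desired $\sstate{\sync{A'}{q}} \etrans{u, n, A' \cup A''} \sstate{\sync{A'}{q[\xs i]}}$, with the target matching $(\sync{A'}{q})[\xs i]$ by definition.

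The main obstacle is essentially bookkeeping: one must verify carefully that the splits $\xs f = \xs f_1 \concat \xs f_2$, $\xs g = \xs g_1 \concat \xs g_2$, and $\xs i = \xs i_1 \concat \xs i_2$ are consistent, that the indexed conjunction defining $u$ and $n$ distributes correctly across concatenation, and that $p[\xs i]$ as defined (pointwise replacement of initial predicate functions) agrees with the inductively built target composition. Once this indexing is made precise, every case reduces to a direct application of the corresponding symbolic rule; no deep semantic argument is required beyond the observation that the LiTS "leads-to" relations were designed precisely to record exactly the information the symbolic environment rules consume.
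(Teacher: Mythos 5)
Your proof is correct and takes exactly the approach the paper indicates: the paper omits the argument, stating only that these theorems ``can be proved using structural induction'' and are ``relatively simple,'' and your case analysis (automaton via rules~\ref{lrule:inv:atomicaut}/\ref{srule:environment:automata}, parallel composition via rule~\ref{rule:environment:pc}, synchronizing action via rule~\ref{srule:environment:sync}) is precisely that induction carried out. The only point worth noting is that you correctly infer the omitted $\saof$ rule for $\sync{A'}{q}$ must add $A'$ to the set (rather than leave it unchanged as the invariant rule does), which is forced by matching rule~\ref{srule:environment:sync}.
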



\begin{theorem}[Completeness of leads to Transitions]
  For all $p$, $u$, $n$, $A$, and $p'$ we have that if there is an
  environment transition $\sstate{p} \etrans{u, n, A} \sstate{p'}$
  then there are $\xs i$, $\xs f$, $\xs g$, $u$, and $n$ such that the
  following conditions hold:
  \begin{enumerate}
  \item $\xs i \in \mathrm{locsof}(p)$
  \item $p \ipredof \xs f$, $p \invof \xs g$, $p \saof A$
  \item  $u = \displaystyle \bigwedge_{0 \leq i < \#\xs f} \xs f{.}i
    (\xs i{.}i)$, and $n = \displaystyle \bigwedge_{0 \leq i < \# \xs
      g} \xs g {.} i (\xs i {.} i)$,  $p' \equiv p[\xs i]$
  \end{enumerate}
\end{theorem}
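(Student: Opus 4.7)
The plan is to prove this by induction on the derivation of the symbolic environment transition $\sstate{p} \etrans{u, n, A} \sstate{p'}$, which amounts to case analysis on the structure of $p$ (atomic automaton, parallel composition, or synchronizing-action scope). For each case, I construct the witnesses $\xs i$, $\xs f$, $\xs g$ by applying the corresponding linear rules in Table~\ref{tab:linear-rules} and recombining using concatenation, and then check that the third condition decomposes compatibly.

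\textbf{Base case.} If $p$ is an atomic automaton $\alpha = (V, \actv, \inv, E, \actS)$, the only applicable symbolic rule is \ref{srule:environment:automata}, so the hypothesis forces $p' = \alpha[v]$ for some $v \in V$ together with $u = \actv(v)$, $n = \inv(v)$, and $A = \actS$. I take $\xs i = [v]$ (which lies in $\mathrm{locsof}(p)$), and use the atomic linear rules to obtain $\xs f = [\actv]$, $\xs g = [\inv]$, and $p \saof \actS$. All three sequences have length $1$, so the big conjunctions in condition 3 reduce to $\actv(v)$ and $\inv(v)$ respectively, and $p[\xs i] = \alpha[\ipred{v}] = \alpha[v] = p'$.

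\textbf{Parallel composition.} If $p = p_1 \parallel p_2$, the transition must come from Rule~\ref{rule:environment:pc}, giving $\sstate{p_j} \setrans{u_j, n_j, A_j} \sstate{p_j'}$ for $j \in \{1,2\}$ with $u = u_1 \wedge u_2$, $n = n_1 \wedge n_2$, $A = A_1 \cup A_2$, and $p' = p_1' \parallel p_2'$. Applying the induction hypothesis to each premise yields witnesses $\xs i_j, \xs f_j, \xs g_j$ for $p_j$. I then set $\xs i = \xs i_1 \concat \xs i_2$, $\xs f = \xs f_1 \concat \xs f_2$, $\xs g = \xs g_1 \concat \xs g_2$, which satisfy conditions 1 and 2 by rule~\ref{lrule:inv:pcomp} and its analogues for $\ipredof$ and $\saof$. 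Condition 3 follows by splitting the big conjunction at index $\#\xs f_1$ (resp.\ $\#\xs g_1$), and $p[\xs i] = p_1[\xs i_1] \parallel p_2[\xs i_2] = p'$ by the definition of the $[\cdot]$ operator on parallel compositions.

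\textbf{Synchronizing-action scope.} If $p = \sync{A'}{q}$, the derivation must use Rule~\ref{srule:environment:sync}, giving $\sstate{q} \setrans{u, n, A''} \sstate{q'}$ with $A = A' \cup A''$ and $p' = \sync{A'}{q'}$. The induction hypothesis supplies $\xs i, \xs f, \xs g$ for $q$; I reuse these unchanged and observe that rule~\ref{lrule:inv:sync} (together with its $\ipredof$ counterpart) gives $\sync{A'}{q} \invof \xs f$ and $\sync{A'}{q} \ipredof \xs g$, while the $\saof$ linear rule for the scope operator produces the required set $A' \cup A'' = A$. The witness for condition 3 is inherited, and $p[\xs i] = \sync{A'}{q[\xs i]} = \sync{A'}{q'} = p'$.

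The step I expect to require the most care is condition 3 in the parallel-composition case: verifying that $\bigwedge_{0 \leq i < \#(\xs f_1 \concat \xs f_2)} (\xs f_1 \concat \xs f_2){.}i\,(\xs i{.}i) = u_1 \wedge u_2$ and the analogous identity for $\xs g$. This is purely combinatorial, following from the definitions of $\concat$ and indexing, but it is the one place where the bookkeeping of positions inside concatenated sequences needs to be made explicit. Everything else is a direct translation between symbolic and linear derivations, since the linear rules were deliberately designed to mirror the symbolic ones componentwise.
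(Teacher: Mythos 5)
Your proof is correct and follows exactly the route the paper indicates (structural induction on $p$, equivalently case analysis on the derivation of the symbolic environment transition), with the witnesses assembled by concatenation in the parallel case; the paper itself omits this proof for space reasons, so there is nothing to diverge from. The only blemish is a harmless slip in the scope case, where you write $\sync{A'}{q} \invof \xs f$ and $\sync{A'}{q} \ipredof \xs g$, swapping the roles of $\xs f$ and $\xs g$ relative to condition 2.
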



The soundness theorem for linear action transitions shows how a
symbolic action transition can be obtained, using the leads to
transitions as well. Functions $\sqsubseteq$ and $\overwrite$ are
defined below, where $x:\xs x$ is
the list that results after appending the element $x$ to the front of $\xs x$.

\begin{definition}[Sub-sequence and sequence overwriting] Function
  $\sqsubseteq \in A^* \rightharpoonup A^* \rightharpoonup \setofbooleans$
  is defined as follows:
  \begin{flalign*}
    [\ ] \sqsubseteq \xs x & \triangleq \true\\
    (x: \xs x) \sqsubseteq (y:\xs y) & \triangleq ((x \equiv \_) \vee
    (x \equiv y)) \wedge \xs x \sqsubseteq \xs y
  \end{flalign*}
  Function $\overwrite \in A^* \rightharpoonup A^* \rightharpoonup
  A^*$ is defined as follows:
  \begin{flalign*}
    [\ ] \overwrite \xs x &  \triangleq \xs x\\
    (x:\xs x) \overwrite (y:\xs y) & \triangleq
    \begin{cases}
      x:(\xs x \overwrite \xs y) & \text{if } x \neq \_\\
      y:(\xs x \overwrite \xs y) & \text{if } x = \_
    \end{cases}
  \end{flalign*}
\end{definition}

\begin{theorem}[Soundness of Linear Action Transitions]\label{theo:soundness-lits}
  For all $p$, $\xs v$, $a$, $r$, $\xs v'$, $\xs i$, $\xs f$, $\xs g$,
  $u$, $n$, $n$,
  and $A$ we have that if the following conditions hold:
  \begin{enumerate}
  \item $\xs i \in \mathrm{locsof}(p)$
  \item $p \models \sstate{\xs v} \trans{a,r} \sstate{\xs v'}$, $p
    \ipredof \xs f$, $p \invof \xs g$, $p \saof A$
  \item $u = \displaystyle \bigwedge_{0 \leq i < \#\xs f} \xs f{.}i
    (\xs i{.}i)$, and $n = \displaystyle \bigwedge_{0 \leq i < \# \xs
      g} \xs g {.} i (\xs i {.} i)$, $n' =
    \displaystyle \bigwedge_{0 \leq i < \# \xs g} \xs g{.}i ((\xs v'
    \overwrite \xs i){.}i)$, $\xs v \sqsubseteq \xs i$, $b \equiv a
  \in A$
 \end{enumerate}
 then there is a symbolic transition: $
 \sstate{p}\trans{a,b,u,n,n',r}\sstate{p[\xs v' \overwrite \xs i]}$.
\end{theorem}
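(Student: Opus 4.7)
The plan is to prove the statement by structural induction on the composition $p$, since $p$ is built from atomic automata using parallel composition and the synchronizing action operator, and the linear SOS rules follow exactly this structure.

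For the base case, $p$ is an atomic automaton $(V, \actv, \inv, E, \actS)$. Then the only rule that can derive $p \models \sstate{\xs v}\trans{a,r}\sstate{\xs v'}$ is the atomic-automaton action rule, so $\xs v = [v]$, $\xs v' = [v']$ for some edge $(v, a, r, v') \in E$. The other leads-to transitions determine $\xs f = [\actv]$, $\xs g = [\inv]$, and $A = \actS$ uniquely. From $\xs i \in \mathrm{locsof}(p)$ and $\xs v \sqsubseteq \xs i$ we deduce $\xs i = [v]$, and $\xs v' \overwrite \xs i = [v']$. Then $u = \actv(v)$, $n = \inv(v)$, $n' = \inv(v')$, and I can directly invoke the symbolic rule \ref{srule:action:automata} to get $\sstate{p}\trans{a, a\in\actS, \actv(v), \inv(v), \inv(v'), r}\sstate{\alpha[v']}$, which matches since $b \equiv a \in A$ and $p[\xs v' \overwrite \xs i] = \alpha[v']$.

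For the parallel composition case $p = p_1 \parallel p_2$, I split the derivation of $p \models \sstate{\xs v}\trans{a,r}\sstate{\xs v'}$ into two sub-cases according to which linear rule was applied. In the interleaving sub-case (rule \ref{lrule:action:interleaving-pc}), the action comes from $p_1$ with $p_1 \models \sstate{\xs v_1}\trans{a,r}\sstate{\xs v_1'}$, and $p_2 \saof A_2$ with $a \notin A_2$. I split $\xs i$, $\xs f$, $\xs g$ accordingly into $\xs i_1, \xs i_2$ etc.\ (using that the lengths of $\xs f$ and $\xs g$ in a parallel composition are concatenations, by rule \ref{lrule:inv:pcomp} and its analogues). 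Applying the induction hypothesis to $p_1$ yields a symbolic action transition from $p_1$, and applying the soundness theorem for leads-to transitions to the $\ipredof$, $\invof$, $\saof$ facts about $p_2$ and the sub-sequence $\xs i_2$ yields a symbolic environment transition $\sstate{p_2}\etrans{u_2, n_2, A_2}\sstate{p_2[\xs i_2]}$. These two plug directly into the symbolic interleaving rule \ref{srule:action:interleaving-pc}, noting that $\xs v_2 \equiv \_$ so the $\overwrite$ operation leaves $\xs i_2$ intact on the $p_2$ side, giving the required $n'$ of the form $n_1' \wedge n_2$. The synchronizing sub-case (rule \ref{lrule:action:sync-pc}) is similar but symmetric: two inductive applications combined with $a \in A_{p_1} \cap A_{p_2}$ give $b = \true$ on both sides, so the symbolic synchronizing rule applies.

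For the scope case $p = \sync{A}{p_1}$, the linear rule \ref{lrule:action:sync} preserves the action transition, the invariant and initialization functions pass through unchanged, and the synchronizing-actions set becomes $A \cup A'$ where $p_1 \saof A'$. Applying the induction hypothesis to $p_1$ yields a symbolic action transition for $p_1$ with synchronizing-label $b_1 \equiv a \in A'$; the symbolic rule \ref{srule:action:sync} then wraps it, producing label $b_1 \vee a \in A$, which is exactly $a \in A \cup A' = a \in A$ as required by the theorem's condition $b \equiv a \in A$.

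I expect the main obstacle to be careful bookkeeping in the parallel composition case: I must verify that splitting $\xs i$, $\xs f$, $\xs g$ into the two components is consistent with $\mathrm{locsof}$, that the $\overwrite$ operation interacts correctly with concatenation (so that $\xs v' \overwrite \xs i$ restricted to each side gives the right successor locations), and that the construction of $n'$ in the linear setting (a conjunction over invariants evaluated at $\xs v' \overwrite \xs i$) matches the $n_p' \wedge n_q$ form demanded by the symbolic interleaving rule. Once these list-manipulation identities are unpacked, the rest of the proof is a routine rule-by-rule check.
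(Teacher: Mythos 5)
Your proposal is correct and takes the same route the paper indicates: the paper omits the proof but states that these theorems are proved by structural induction on the composition, which is exactly your case analysis on the atomic automaton, the two parallel-composition sub-cases, and the scope operator, with the soundness of leads-to transitions supplying the environment transition needed for the interleaving rule. The bookkeeping identities you flag (splitting $\xs i$, $\xs f$, $\xs g$ across $\concat$, and $(\xs v_1' \concat \_) \overwrite (\xs i_1 \concat \xs i_2) = (\xs v_1' \overwrite \xs i_1) \concat \xs i_2$) do go through as you expect.
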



\begin{theorem}[Completeness of Linear Action Transitions]
  For all $p$, $p'$, $a$, $b$, $u$, $n$, $n'$, and $r$ we have that if
  there is a symbolic transition:
  \begin{equation*}
    \sstate{p} \satrans{a, b, u, n, n', r} \sstate{p'}
  \end{equation*}
  then there are $\xs v$, $\xs v'$, $\xs i$, $\xs f$, $\xs g$, and $A$
  such that the following conditions hold:
  \begin{enumerate}
  \item $\xs i \in \mathrm{locsof}(p)$
  \item $p \models \sstate{\xs v} \trans{a,r} \sstate{\xs v'}$, $p
    \ipredof \xs f$, $p \invof \xs g$, $p \saof A$
  \item $u = \displaystyle \bigwedge_{0 \leq i < \#\xs f} \xs f{.}i
    (\xs i{.}i)$, and $n = \displaystyle \bigwedge_{0 \leq i < \# \xs
      g} \xs g {.} i (\xs i {.} i)$, $n' =
    \displaystyle \bigwedge_{0 \leq i < \# \xs g} \xs g{.}i ((\xs v'
    \overwrite \xs i){.}i)$, $\xs v \sqsubseteq \xs i$, $b \equiv a
  \in A$, $p' \equiv p[\xs v' \overwrite \xs i]$
  \end{enumerate}
\end{theorem}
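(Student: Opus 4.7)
The plan is induction on the derivation of $\sstate{p} \satrans{a, b, u, n, n', r} \sstate{p'}$, which by the shape of the symbolic rules in Table~\ref{tab:symbolic-rules} is essentially induction on the structure of $p$. In each case I construct the witnesses $\xs v, \xs v', \xs i, \xs f, \xs g, A$ from the components supplied by the inductive hypothesis, together with the completeness theorem for leads-to transitions (covering invariants, initialization predicates, and synchronising actions) proved just above, and then fire the corresponding rule of Table~\ref{tab:linear-rules} to obtain $p \models \sstate{\xs v} \trans{a, r} \sstate{\xs v'}$. In the base case, $p$ is an automaton $\alpha = (V, \actv, \inv, E, \actS)$ and Rule~\ref{srule:action:automata} forces the symbolic transition to come from some edge $(v, a, r, v') \in E$ with $u = \actv(v)$, $n = \inv(v)$, $n' = \inv(v')$ and $b \equiv a \in \actS$; taking $\xs v = [v]$, $\xs v' = [v']$, $\xs i = [v]$, $\xs f = [\actv]$, $\xs g = [\inv]$, $A = \actS$, and invoking Rule~\ref{lrule:action:atomicaut} together with the base rules for $\ipredof$, $\invof$, $\saof$, discharges every clause, using $(\xs v' \overwrite \xs i){.}0 = v'$.

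For the inductive step there are three subcases. For synchronising parallel composition (Rule~\ref{srule:action:sync-pc}), the IH on both premises delivers witnesses with $b_p = b_q = \true$, whence $a \in A_p \cap A_q$; Rule~\ref{lrule:action:sync-pc} combines them, and I set $\xs i = \xs i_p \concat \xs i_q$, $\xs f = \xs f_p \concat \xs f_q$, $\xs g = \xs g_p \concat \xs g_q$, $A = A_p \cup A_q$. For interleaving (Rule~\ref{srule:action:interleaving-pc}), the IH applies to the $p$-premise, while the completeness theorem for leads-to transitions applied to the $q$-premise supplies $\xs j_q, \xs f_q, \xs g_q$ matching $u_q, n_q$ with the very $A$ and $a \notin A$ from the symbolic rule; Rule~\ref{lrule:action:interleaving-pc} then pads the $q$-side with a sequence of wild-cards, and I take $\xs v = \xs v_p \concat \_$, $\xs v' = \xs v_p' \concat \_$, $\xs i = \xs i_p \concat \xs j_q$. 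For the synchronising-action-operator case (Rule~\ref{srule:action:sync}), the IH combined with Rule~\ref{lrule:action:sync} and the $\saof$-analogue of Rule~\ref{lrule:inv:sync} settles the result, since $a \in A_p \cup A \equiv (a \in A_p) \vee (a \in A) \equiv b \vee a \in A$.

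The main obstacle I anticipate lies in the interleaving case, where one must show that the reconstructed post-invariant conjunction $\bigwedge_{i} \xs g{.}i ((\xs v' \overwrite \xs i){.}i)$ equals the symbolic $n'$. Since $\xs v' = \xs v_p' \concat \_$, the overwrite reduces $(\xs v' \overwrite \xs i){.}i$ to $\xs j_q{.}i$ on every $q$-component position, so the contribution from $q$ becomes $\bigwedge_{i} \xs g_q{.}i(\xs j_q{.}i) = n_q$ rather than $n_q'$. This is in fact exactly what Rule~\ref{srule:action:interleaving-pc} asks for, since its post-invariant carries $n_p' \wedge n_q$ and not $n_p' \wedge n_q'$. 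A short calculation using the recursive definitions of $\overwrite$ and $\sqsubseteq$ on concatenations, together with $\xs v_p \sqsubseteq \xs i_p$ from the IH and the evident distribution of $p[\cdot]$ over $\parallel$ and $\syncfs_A$, closes this gap and simultaneously discharges the remaining conditions $\xs v \sqsubseteq \xs i$ and $p' \equiv p[\xs v' \overwrite \xs i]$.
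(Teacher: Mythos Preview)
Your proposal is correct and follows precisely the approach the paper indicates: the authors state only that ``these theorems can be proved using structural induction'' and omit the details, and your case analysis on the symbolic rules (base automaton, synchronising and interleaving parallel composition, and the $\syncfs_A$ operator), together with the appeal to the completeness of leads-to transitions for the environment premise in the interleaving case, is exactly that structural induction carried out. The delicate point you flag about the post-invariant in the interleaving case---that the wild-card overwrite on the $q$-side yields $n_q$ rather than some $n_q'$, matching the $n_p' \wedge n_q$ label of Rule~\ref{srule:action:interleaving-pc}---is indeed the only non-routine verification, and you resolve it correctly.
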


These theorems can be proved using structural induction.
  The proofs are relatively simple, and are omitted due to space
  constraints.


\section{Obtaining a Linear Automaton from a LiTS}
\label{sec:obta-line-autom}

Once a linear transition system is induced by the SOS rules, we need a
way to obtain a linear automaton from it. In this section we describe
the procedure, and we show that the generated automaton is stateless
bisimilar~\cite{MousaviRenGro:CongrSOSdataArtInfComp05} to the composition that induced the transition system. Both
from a theoretical and a practical point of view this is an
interesting result, which tells us that every composition can be
reduced to an automaton (this is intuitively obvious for the language
we present here, but it is not for \CIF\ and its hierarchical
extension).

Formally, given an composition $p$ and its associated LiTS $M$, we want to
build an automaton $\alpha_p$ such that $p$ has the same behavior as
$\alpha_p$.
The idea is to simulate the execution of $M$, using $\alpha_p$.
To this end, we need to introduce a sequence of variables $\xs l$,
which are used to represent the active state in $M$ in a given
execution. We call these variables \emph{location
  pointers}~\cite{Khadim3:LinHChiPCTech07}.
Below, we give the definition\footnote{Strictly speaking, function
  $\lfunfs$ is not uniquely determined, since it is possible to pick
  different location pointers. This can be avoided by defining a function
  that returns the least $n$ fresh variables in a given composition
  (assuming variables are totally ordered). A similar observation can
  be done about location $x$.} of the linearization
function, which returns the automaton associated to a given
composition and the location pointers used in it. The second component
returned by the function is used later to formulate the correctness result.

\begin{definition}[Linearization Function]
  Let $p$ be a \CIF\ composition. Function $\lfunfs \in \compositions
  \rightarrow (\compositions \times \variables^*)$ is defined as the
  least function that satisfies:
  \begin{equation*}
    \lfun{p} = ((\{x\}, \actv, \inv, E, \actS), \xs l)
  \end{equation*}
  where
  \begin{itemize}
  \item $p \compsof \xs x$, $\#\xs x = n$, $p \ipredof \xs f$, $p
    \invof \xs g$, $p \saof A$
  \item $\langle \forall i :: 0 \leq i < n \Rightarrow \xs l {.} i \notin
    \mathrm{FV}(p) \rangle$, $x \in \locations$
  \item $\actv(x) = (\displaystyle \bigwedge_{0 \leq i < n}\
    \displaystyle \bigwedge_{v \in \dom (\xs f{.}i)} (\xs l {.} i = v
    \Rightarrow \xs f {.} i(v)) ) \wedge (\displaystyle \bigwedge_{0
      \leq i < n} \xs l {.} i \in \dom(fs.i))$
  \item $\inv(x) = \displaystyle \bigwedge_{0 \leq i < n}\
    \displaystyle \bigwedge_{v \in \dom (\xs g{.}i)} (\xs l {.} i = v
    \Rightarrow \xs g {.} i(v))$
  \item $E = \{ (x, a, r \wedge \displaystyle \bigwedge_{
      \begin{array}{l}
         0 \leq i < n\\
        vs.i \neq \_
      \end{array}} \xs l {.} i = vs{.}i \wedge \xs l {.} i^+ =
    vs'{.}i, x) \ |\ p \models
    \sstate{\xs v}\trans{a, r}\sstate{\xs v'}\}$
  \end{itemize}
\end{definition}

In the above definition we introduce $n$ free variables, which are
used as location pointers, and we use a location $x$ (which can be
defined as the least location in $\locations$) as the unique location
of the automaton.
The initial predicate and invariant functions are conditional
expressions, which ensure that the right predicate is chosen according
to the values of the location pointers. In the definition of the
$\actv$ function, the second part of the conjunction forces the choice
of an initial location (otherwise this predicate can be trivially
satisfied).
The set of edges is constructed from the action transition of the
linear transition system. The reset mapping in the action transitions
is extended with updates to the location pointers to keep track of the
state in the linear transition system.

The well-definedness of function $\lfunfs$ is a consequence of the
finiteness of LiTSs. 
%
Given a composition $p$, such that $\lfun{p} = (\alpha_p, \xs l)$, we
say that $\alpha_{p}$ is the linear automaton associated to it.

For the train gate model, the linear automaton associated to it is
shown in Figure~\ref{fig:linear-automaton}, where the initial
predicate and invariant functions are (once they are \emph{simplified}):
\begin{flalign*}
  \actv(x) & = ( l_0 = F \wedge l_1 = F \wedge l_2 = C \wedge
  \mathit{wq}
  = [\ ])\\
  \inv(x) & =  (l_2 = C \Rightarrow n = 0) \wedge (l_2 = O
  \Rightarrow n \leq 1)
\end{flalign*}

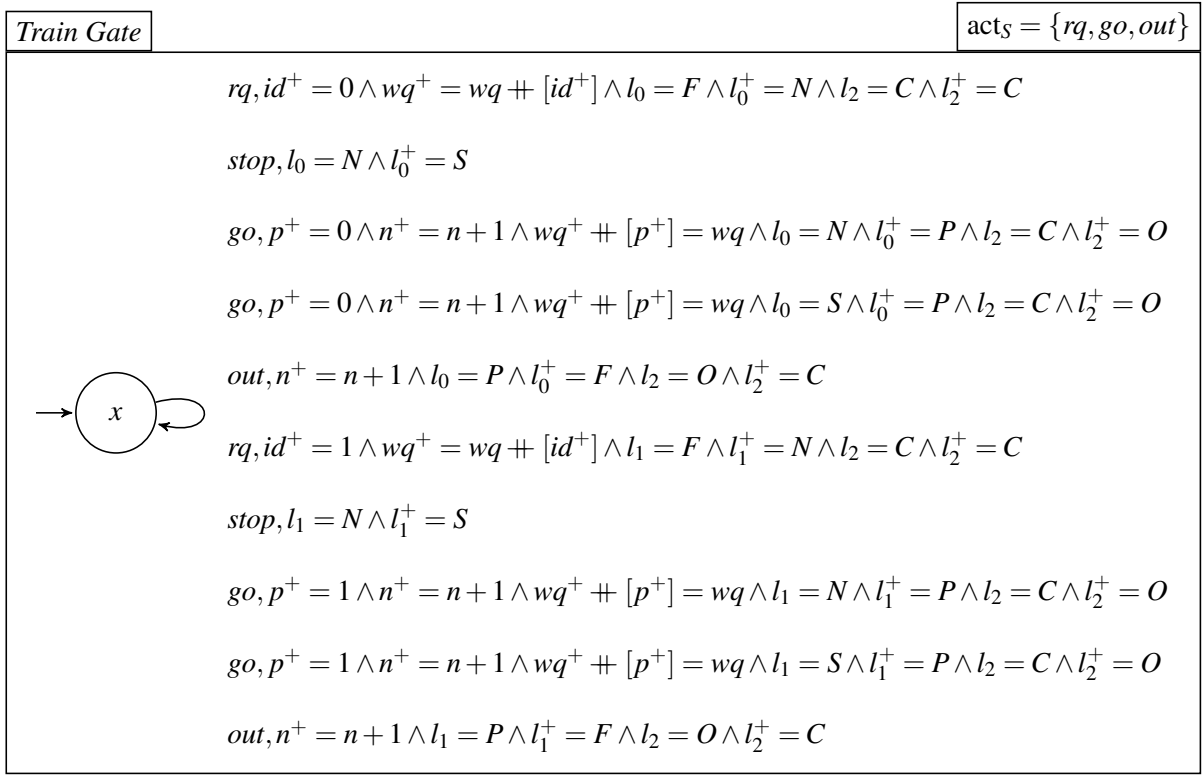
\begin{figure}[htb]
  \centering
  \scalebox{1}{
    \begin{tikzpicture}[->,>=stealth',shorten >=1pt,auto, semithick,
  initial text=, node distance=13em]

  \node[initial, state] (x) {
    $
    \begin{array}{c}
      x
    \end{array}
    $
  };

  \path (x) edge [loop right] node [right] {
    $
    \begin{array}{l}
      \mathit{rq}, \mathit{id}^+=0 \wedge
      \mathit{wq}^+=\mathit{wq} \concat [\mathit{id}^+] \wedge
      l_0=F \wedge l_0^+=N \wedge l_2=C \wedge l_2^+=C\\ \\
      \mathit{stop},
      l_0=N \wedge l_0^+=S\\ \\
      \mathit{go},  p^+=0 \wedge  n^+=n+1 \wedge
      \mathit{wq}^+ \concat [p^+] = \mathit{wq} \wedge l_0=N \wedge
      l_0^+=P \wedge l_2=C \wedge l_2^+=O\\ \\
      \mathit{go},  p^+=0 \wedge  n^+=n+1 \wedge
      \mathit{wq}^+ \concat [p^+] = \mathit{wq} \wedge l_0=S \wedge
      l_0^+=P \wedge l_2=C \wedge l_2^+=O\\ \\
      \mathit{out}, n^+=n + 1 \wedge l_0=P \wedge l_0^+=F \wedge l_2=O
      \wedge l_2^+=C\\ \\
      \mathit{rq}, \mathit{id}^+=1 \wedge
      \mathit{wq}^+=\mathit{wq} \concat [\mathit{id}^+] \wedge
      l_1=F \wedge l_1^+=N \wedge l_2=C \wedge l_2^+=C\\ \\
      \mathit{stop},
      l_1=N \wedge l_1^+=S\\ \\
      \mathit{go},  p^+=1 \wedge  n^+=n+1 \wedge
      \mathit{wq}^+ \concat [p^+] = \mathit{wq} \wedge l_1=N \wedge
      l_1^+=P \wedge l_2=C \wedge l_2^+=O\\ \\
      \mathit{go},  p^+=1 \wedge  n^+=n+1 \wedge
      \mathit{wq}^+ \concat [p^+] = \mathit{wq} \wedge l_1=S \wedge
      l_1^+=P \wedge l_2=C \wedge l_2^+=O\\ \\
      \mathit{out}, n^+=n + 1 \wedge l_1=P \wedge l_1^+=F \wedge l_2=O
      \wedge l_2^+=C
    \end{array}
    $
  } (x);

  \node[namebox, fit=(closed) (opened) (current bounding box.south west)
  (current bounding box.north east)] (bigsquare) {};

  \node[namebox,anchor=south west] at (current bounding box.north
  west) {$\mathit{Train\ Gate}$};

  \node[namebox,anchor=south east] at
  (bigsquare.north east)
  {$
    \actS = \{\mathit{rq}, \mathit{go}, \mathit{out} \}
    $
  };
\end{tikzpicture}
  }
  \caption{Linear version of the gateway
    model}\label{fig:linear-automaton}
\end{figure}


The next step is to prove that the linear version of a composition is
indeed equivalent to it.
If we consider the transition systems they induce, we find that these
differ significantly among each other: they have different labels,
invariants, etc. Thus if we want to prove equivalence at the LiTS
level, we need a non-trivial definition of equivalence.

%
A better strategy is to prove equivalence at the labeled transition
level (using the explicit semantics).
Therefore, we prove that $p$ and its associated linear automaton
are stateless bisimilar, after abstracting away the values of the
program counters (or location pointers).
%
The standard notion of strong
bisimilarity~\cite{MousaviRenGro:CongrSOSdataArtInfComp05} is defined
below.

\begin{definition}[Strong Bisimilarity for SOS]
  A symmetric relation $R$ is a strong bisimulation relation if for
  all $(p, q) \in R$, and for all $\sigma$, $\ell$, $p'$, $\sigma'$
  the following transfer conditions hold:
  \begin{enumerate}
  \item $(p, \sigma) \trans{\ell} (p', \sigma') \Rightarrow \langle
    \exists q' :: (q, \sigma) \trans{\ell} (q', \sigma') \wedge (p',
    q') \in R \rangle$
  \item $(p, \sigma) \etrans{\ell} (p', \sigma') \Rightarrow \langle
    \exists q' :: (q, \sigma) \etrans{\ell} (q', \sigma') \wedge (p',
    q') \in R \rangle$
  \end{enumerate}
  Two closed terms $p$ and $q$ are strongly bisimilar, denoted $SOS \models p
  \bisim q$, if $(p,q) \in R$ for some strong bisimulation relation
  $R$.
\end{definition}

Next, we present the SOS rules for the variable scope operator in
Table~\ref{tab:explicit-rules-varscope} (the rules for environment
transitions are similar and therefore omitted). In these rules we make
use of the following notations:
\begin{itemize}
\item Given two sequences $\xs x$ and $\xs y$, such that $\# \xs x =
  \# \xs y$, $\{ \xs x \mapsto \xs y\} \in \ran(\xs x) \rightarrow
  \ran(\xs y)$ is a function defined as follows:
  \begin{equation*}
    \{ \xs x \mapsto \xs y\} = \{ (\xs x {.} i, \xs y {.} i) \ |\
    0 \leq i < \# \xs x\}
  \end{equation*}
\item The notation above is overloaded to denote a similar function.
  We believe this keeps the notation concise and it does not bring
  confusion. Given a sequence $\xs x$ and an element $y$, $\{ \xs x
  \mapsto y\} \in \ran(\xs x) \rightarrow \{ y\}$ is a function
  defined as follows:
  \begin{equation*}
    \{ \xs x \mapsto y\} = \{ (\xs x {.} i,  y) \ |\
    0 \leq i < \# \xs x\}
  \end{equation*}
\item Symbol $\bot$ denotes the undefined value.
\end{itemize}

\begin{table}[htb]
  \centering
  \begin{tabular}{c}
    \begin{minipage}[c]{1\linewidth}
      \Sosrule{
        (p, \{ \xs x \mapsto \xs v\} \overwrite \sigma)
        \trans{a, b}
        (p', \{ \xs x \mapsto \xs v'\} \overwrite \sigma')
      }{
        (\vscope{\{ \xs x \mapsto \xs v\}}{p},  \sigma)
        \trans{a, b}
        (\vscope{\{ \xs x \mapsto \xs v'\}}{p'}, \sigma')
      }{rule:action:vscope}
    \end{minipage}\\  \\
    \begin{minipage}[c]{1\linewidth}
      \Sosrule{
        (\vscope{\{ \xs x \mapsto \xs v\}}{p},  \sigma)
        \trans{a, b}
        (\vscope{\{ \xs x \mapsto \xs v'\}}{p'}, \sigma')
      }{
        (\vscope{\{ \xs x \mapsto \bot \}}{p},  \sigma)
        \trans{a, b}
        (\vscope{\{ \xs x \mapsto \xs v'\}}{p'}, \sigma')
      }{rule:action:vscope}
    \end{minipage}\\ \\
  \end{tabular}
  \caption{SOS rules for the variable scope operator}
  \label{tab:explicit-rules-varscope}
\end{table}

Using the previously defined operator and the notion of stateless
bisimilarity, we can enunciate the theorem which states that the
linearization procedure is correct.

\begin{theorem}[Correctness of the Linearization]
  \label{theo:correctness-of-linearization}
  Let $p$ be a composition, and $\lfun{p}=(\alpha_p, \xs l)$. Then we
  have:
  \begin{equation*}
    SOS \models p \bisim
    \vscope{\{ \xs l \mapsto \bot \}}{\alpha_p}
  \end{equation*}
\end{theorem}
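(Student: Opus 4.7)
The plan is to exhibit an explicit bisimulation relation between residuals of $p$ and states of the scoped linear automaton, and to verify the transfer conditions by chaining the soundness/completeness results from Sections 2 and 3. Let $\lfun{p}=(\alpha_p,\xs l)$ and write $n$ for the number of atomic automata in $p$. I would define
\begin{equation*}
  R \;=\; \{(p,\, \vscope{\{\xs l \mapsto \bot\}}{\alpha_p})\} \;\cup\; \{(p[\xs i],\, \vscope{\{\xs l \mapsto \xs i\}}{\alpha_p}) \mid \xs i \in \mathrm{locsof}(p)\}
\end{equation*}
and show that the symmetric closure of $R$ is a strong bisimulation. The key observation justifying this choice is that, by the construction of $\lfunfs$, the values currently stored in $\xs l$ under the variable scope play exactly the role of the symbolic state $\sstate{\xs i}$ in the LiTS of $p$.

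For the forward transfer of an action transition $(p[\xs i],\sigma)\trans{a,b}(q,\sigma')$, I would first apply Theorem~\ref{theo:completeness-atrans} to obtain a symbolic transition $\sstate{p[\xs i]}\satrans{a,b,u,n,n',r}\sstate{q}$, and then use the completeness of linear action transitions to decompose it into $p\models\sstate{\xs v}\trans{a,r}\sstate{\xs v'}$ together with the matching $\ipredof$, $\invof$ and $\saof$ transitions, with $\xs v\sqsubseteq\xs i$ and $q\equiv p[\xs v'\overwrite\xs i]$. By definition of $\lfunfs$, this linear action transition produces an edge of $\alpha_p$ whose reset predicate is exactly $r$ conjoined with $\xs l{.}i = vs{.}i \wedge \xs l{.}i^+ = vs'{.}i$ on the non-wildcard positions; since $\xs l$ currently holds $\xs i$ and $\xs v\sqsubseteq\xs i$, the guard part is satisfied, and the update part changes $\xs l$ to $\xs v'\overwrite\xs i$. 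Rule~\ref{rule:action:automata} combined with the scope rule in Table~\ref{tab:explicit-rules-varscope} then yields the required transition into the $R$-partner of $q$. The reverse direction runs the same chain backwards, using the soundness theorems instead.

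Environment transitions are handled analogously via the soundness/completeness results for leads-to transitions: the conjunction of $\xs f{.}i(\xs i{.}i)$ and $\xs g{.}i(\xs i{.}i)$ coming from $p\ipredof\xs f$ and $p\invof\xs g$ matches, term by term, the guards $(\xs l{.}i=v \Rightarrow \xs f{.}i(v))$ and $(\xs l{.}i=v\Rightarrow \xs g{.}i(v))$ appearing in $\actv(x)$ and $\inv(x)$, under the valuation $\{\xs l\mapsto\xs i\}$.

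The only non-routine case is the initial pair $(p,\,\vscope{\{\xs l\mapsto\bot\}}{\alpha_p})$, and I expect this to be the main obstacle. On the left, a transition of $p$ requires its initial predicates to hold; on the right, the second scope rule in Table~\ref{tab:explicit-rules-varscope} allows $\xs l$ to be initialised with any values $\xs i$, but only those satisfying $\actv(x)$ — and the extra conjunct $\bigwedge_i \xs l{.}i \in \dom(\xs f{.}i)$ in the definition of $\actv$ is precisely what forces $\xs i\in\mathrm{locsof}(p)$, so that the resulting target state falls into the second clause of $R$. Once this initialisation step is aligned, the remainder of the bisimulation argument reduces to the generic forward/backward chase described above, and the theorem follows by structural induction on $p$ combined with induction on transition derivations.
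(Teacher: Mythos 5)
Your relation $R$ is exactly the witness used in the paper (the paper states it once for all compositions $p$ rather than for the fixed $p$ of the theorem, but the content is the same), and your strategy of verifying the transfer conditions by chaining the explicit/symbolic soundness and completeness theorems with the symbolic/linear ones, plus the definition of $\lfunfs$ and the variable-scope rules for the initial pair, is precisely the paper's argument. The only quibble is your closing appeal to structural induction on $p$: the paper explicitly notes that none is needed, because all the inductive work is already packaged inside the soundness and completeness results you invoke.
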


\begin{proof}
  It is possible to prove that the following relation:
  \begin{flalign}
    R \triangleq & \{ (p, \vscope{\{\xs l \mapsto \bot\}}{\alpha_{p}}) \
    |\
    (\alpha_p, \xs l) = \lfun{p} \} \cup{} \nonumber\\
    & \{ (p[\xs i], \vscope{\{\xs l \mapsto \xs i\}}{\alpha_{p}}) \ |\
    (\alpha_p, \xs l) = \lfun{p}, \xs i \in \mathrm{locsof}({p}) \}\label{eq:4}
  \end{flalign}
  is a witness of the bisimulation.
  The proof uses the soundness and completeness results presented in
  Sections~\ref{sec:expl-symb-semant} and \ref{sec:relat-lits-sts},
  and it does not require the use of structural induction.
\end{proof}




\section{Concluding Remarks}
\label{sec:concluding-remarks}

We have presented linearization algorithm for a subset of \CIF, which
shows that every \CIF\ composition can be reduced to an automaton.
The linearization procedure was obtained in a stepwise manner from the
SOS specification of this language.
In this way, SOS rules are used not only to specify the behavior of
\CIF, but also as a specification formalism for performing semantic
preserving manipulations on the syntactic elements of the language.

The soundness and completeness results between the different
transition systems give us a simple proof of correctness on the
linearization procedure. The different levels in which a language is
described (explicit, symbolic, and linear semantics) provide a
convenient way to tackle specific problems. The explicit semantics is
useful for achieving an abstract and succinct specification of the
language. The symbolic semantics give us the means for specifying
symbolic computations. Finally, the linear semantics yields an
efficient representation of the state space associated to a given
composition.

We conjecture the method presented here can be applied to any
automaton based language. For process algebraic specification language
it may not be suitable due to the presence of recursion.

As future work, we plan to extension the linearization algorithm to
the full CIF, and therefore, to a hybrid setting. Time-can-progress
predicates and dynamic types can be extracted in the same way
invariants were extracted in this work, and therefore we expect no
problems in this regard.

\bibliographystyle{eptcs}
\bibliography{hybchi}

\end{document}